\newtheorem{theorem}{Theorem}
\newtheorem{lemma}[theorem]{Lemma}
\newtheorem{example}[theorem]{Example}
\long\def\symbolfootnote[#1]#2{\begingroup
\def\thefootnote{\fnsymbol{footnote}}\footnote[#1]{#2}\endgroup}
\renewcommand{\paragraph}[1]{{\bf #1}}
\title{Two Piggybacking Codes with Flexible Sub-Packetization to Achieve Lower Repair Bandwidth}
\author{Hao Shi, Zhengyi Jiang, Zhongyi Huang, Bo Bai, Hanxu Hou\\
\thanks{
This paper was presented in part at the IEEE International Symposium on Information Theory (ISIT), 2022 \cite{Shi2206:Piggybacking}.
H. Shi, Z. Jiang and Z. Huang are with the Department of Mathematics Sciences, Tsinghua University (E-mail: shih22@mails.tsinghua.edu.cn, jzy21@mails.tsinghua.edu.cn, zhongyih@tsinghua.edu.cn).
B. Bai and H. Hou are with the Theory Lab, Central Research Institute, 2012 Labs, Huawei Tech. Co. Ltd.~(E-mail: baibo8@huawei.com,
hou.hanxu@huawei.com).
This work was partially supported by the National Key R\&D Program of
China (No. 2020YFA0712300), the National Natural Science Foundation of China (No. 62071121).}}
\begin{document}
\let\emph\textit
\maketitle
\pagestyle{empty}  
\thispagestyle{empty} 
\begin{abstract}
	
As a special class of array codes, $(n,k,m)$ piggybacking codes are
MDS codes (i.e., any $k$ out of $n$ nodes can retrieve all data symbols) that
can achieve low repair bandwidth for single-node failure with low sub-packetization $m$.
In this paper, we propose two new piggybacking codes that have lower repair bandwidth than the existing
piggybacking codes given the same parameters. Our first piggybacking codes can
support flexible sub-packetization $m$ with
$2\leq m\leq n-k$, where $n - k > 3$. We show that our first piggybacking codes have
lower repair bandwidth for any single-node failure than the existing piggybacking
codes when $n - k = 8,9$, $m = 6$ and $30\leq k \leq 100$. Moreover, we
propose second piggybacking codes such that the sub-packetization is a multiple of
the number of parity nodes (i.e., $(n-k)|m$), by jointly designing the piggyback function for
data node repair and transformation function for parity node repair. We show that the proposed
second piggybacking codes have lowest
repair bandwidth for any single-node failure among all the existing piggybacking codes
for the evaluated parameters $k/n = 0.75, 0.8, 0.9$ and $n-k\geq 4$.
\end{abstract}

\begin{IEEEkeywords}
Piggybacking codes, repair bandwidth, sub-packetization, single-node failure, transformation.
\end{IEEEkeywords}

\section{introduction}

Maximum distance separable (MDS) array codes are widely employed in the modern
distributed storage systems because they provide the maximum data reliability for a
given level of storage overhead. An $(n,k,m)$ MDS array code encodes $km$
{\em data symbols} into $nm$ {\em coded symbols} that are equally stored in $n$ nodes,
where each node stores $m$ symbols. We call the number of symbols stored in each node
as the sub-packetization level. The $(n,k,m)$ MDS array codes satisfy the {\em MDS property}
that is any $k$ out of $n$ nodes can retrieve all $km$ data symbols. The codes are referred
to as {\em systematic codes} if the $km$ data symbols are included in the first $k$ nodes
of obtained $n$ nodes. Reed-Solomon (RS) codes \cite{reed1960} are typical MDS array codes
with $m = 1$. In this paper, we consider systematic
MDS array codes that contain $k$ {\em data nodes} which store the $km$ data symbols
and $r=n-k$ {\em parity nodes} which store the $rm$ parity symbols.

In modern distributed storage systems, node failures are common and single-node failures
occur most frequently among all failures \cite{2013XORing, 2013A}. It is important to
repair the failed node with the {\em repair bandwidth} defined as the total amount of
symbols downloaded from other surviving nodes as small as possible \cite{dimakis2010}.
Recently many constructions of $(n,k,m)$ MDS array codes to achieve the minimum repair bandwidth
$\frac{dm}{d-k+1}$ from $d$ surviving nodes have been proposed in \cite{rashmi2011,hou2016,li2018,2020Toward,2020Binary,ye2017}, all the high-code-rate
(i.e.,$k/n > 0.5$) MDS array codes with the minimum repair bandwidth need an
exponential sub-packetization level in parameters $n$ and $k$ \cite{2018A}.
It is practical important to design high-code-rate MDS array codes with repair bandwidth
as small as possible, for a given small sub-packetization level. HashTag Erasure Codes (HTEC)
\cite{2018HashTag} is a high-code-rate that have efficient repair method for data nodes with
sub-packetization level $2 \leq m \leq r^{\lceil\frac{k}{r}\rceil}$, however no efficient
repair method for parity nodes and the required field size
should be large enough to keep the MDS property.

Piggybacking codes which were first proposed by Rashmi \emph{et al.} in \cite{2017A}
are an important class of MDS array codes that have both low repair bandwidth for
single-node failure and low sub-packetizaiton level. The central idea of piggybacking
codes is creating $m$ instances of RS codes as base codes and designing ingenious
{\em piggyback function} (i.e., a linear combination of some selected symbols in some instances)
which will be added to other instances. Many follow-up piggybacking codes
\cite{2018Repair,2019AnEfficient,article,2021piggyback,2021piggybacking,2022arXiv220509659W} have been
proposed to reduce the repair bandwidth.

In this paper, we present two constructions of piggybacking codes that have lower
repair bandwidth than the existing piggybacking codes for the same parameters. We summarize
the main contributions as follows.
\begin{enumerate}
\item First, we propose first piggybacking codes for $r\geq 4$ and $m\leq r$. We show that our
first piggybacking codes have lower repair bandwidth for any single-node failure than all the
existing piggybacking codes for the evaluated parameters $r = 8,9$, $m = 6$ and $k = 30,31,\ldots,100$.
\item Second, we propose second piggybacking codes by jointly designing piggyback function
for data nodes repair and {\em transformation function} for parity nodes repair, where the sub-packetization
level is a multiple of the number of parity nodes. The proposed second piggybacking codes
have the lowest repair bandwidth for any single-node failure among all the existing
piggybacking codes for the evaluated parameters $\frac{k}{n} = 0.75, 0.8, 0.9$ and $r \geq 4$.
\end{enumerate}

Note that a parallel work \cite{2022arXiv220509659W} also designs piggybacking codes
to obtain low repair bandwidth for $m\leq r$. The differences of codes \cite{2022arXiv220509659W}
and our first piggybacking codes are as follows. We design the piggyback function
by considering the repair bandwidth reduction for both data nodes and parity nodes.
While in \cite{2022arXiv220509659W}, the piggyback functions for data nodes and
parity nodes are respectively designed. Because of the above difference, our codes
have a slightly lower repair bandwidth than that of codes in \cite{2022arXiv220509659W}.
Please refer to Section \ref{sec:com-1} for the comparison.

The main differences between our first piggybacking codes and codes in our conference version
\cite{Shi2206:Piggybacking} are of two-folds.
First, our first piggybacking codes can support flexible sub-packetization $m$,
i.e., $2\leq m\leq r$, while codes in \cite{Shi2206:Piggybacking} only suitable for
$m=r$. Second, the piggyback structure of our first piggybacking codes can be jointly
designed with the proposed transformation function, while not for codes
in \cite{Shi2206:Piggybacking}.


Our second piggybacking codes is partially inspired by the generic transformation in
\cite{li2018}. The difference is that new MDS array codes with exponential sub-packetization
level and optimal repair for any single-node failure can be obtained in \cite{li2018}
by recursively applying the transformation for MDS codes, while we use the transformation
idea to design the transformation function for parity nodes in order to reduce the repair bandwidth
in the meanwhile keeping the low repair bandwidth for data nodes.
Note that it is not natural to obtain repair bandwidth reduction when we design piggyback functions
for data nodes and transformation functions for parity nodes, since both piggyback function
and transformation function are added in the same parity symbol. We need to carefully design
the two functions to achieve lower repair bandwidth. Moreover, we can't design transformation
functions for parity nodes of piggybacking codes with invertible transformation
functions \cite{2017A,2019AnEfficient,Shi2206:Piggybacking}, because the invertible
transformation structure will be destroyed
if the transformation idea is employed for parity nodes.

The rest of the paper is organized as follows.
Section~\ref{sec:framework1} gives the construction for the first piggybacking codes.
Section~\ref{sec:repair1} presents the repair method for the first piggybacking codes.
Section~\ref{sec:framework2} gives the construction for the second piggybacking codes.
Section~\ref{sec:repair2} presents the repair method for the second piggybacking codes.
Section~\ref{sec:comparison} evaluates the repair bandwidth for the proposed
piggybacking codes and the existing related piggybacking codes.
Section~\ref{sec:con} concludes the paper.

\section{Construction of the First Piggybacking Codes}\label{sec:framework1}

Our first piggybacking codes can be represented by an $n\times m$ array,
where the $m$ symbols in each row are stored in a node and $m\leq r=n-k$.
We label the index of the $n$ rows in the array from $1$ to $n$ and the index of
the $m$ columns from 1 to $m$. Let $\{\boldsymbol{a}_i = (a_{i,1},a_{i,2},\ldots,a_{i,k})^T\}^{m}_{i = 1}$ be $m$ columns of the $k \times m$ data symbols and $(a_{i,1},a_{i,2},\ldots,a_{i,k},f_1(\boldsymbol{a}_i),\ldots,f_r(\boldsymbol{a}_i))^T$ be
codeword $i$ of the $\left(n,k\right)$ MDS codes over $\mathbb{F}_q$, where
$f_j(\boldsymbol{a}_i)$ is the parity symbol $j$ in codeword $i$,
$i = 1,2,\ldots,m$ and $j=1,2,\ldots,r$.

We divide $n$ nodes into $L$ disjoint subsets $\Phi_1,\Phi_2, \ldots, \Phi_L$, where
$1\leq L <m$. Each of the first $n-\lfloor\frac{n}{L}\rfloor L$ subsets has size
$\lceil\frac{n}{L}\rceil$ and each of the last $(\lfloor\frac{n}{L}\rfloor+1)L-n$
subsets has size $\lfloor\frac{n}{L}\rfloor$, i.e., $|\Phi_i|=\lceil\frac{n}{L}\rceil$ for $i=1,2,\ldots,n-\lfloor\frac{n}{L}\rfloor L$ and $|\Phi_i|=\lfloor\frac{n}{L}\rfloor$ for $i=n-\lfloor\frac{n}{L}\rfloor L+1,\ldots,L$.
We can check that
$$\lceil\frac{n}{L}\rceil(n-\lfloor\frac{n}{L}\rfloor L)+
\lfloor\frac{n}{L}\rfloor((\lfloor\frac{n}{L}\rfloor+1)L-n)=n,$$
i.e., the $n$ nodes $\{1,2,\ldots,n\}$ are partitioned by the $L$ disjoint subsets.
In this paper, we consider high-code-rate and suppose that the $r$ parity nodes are
in the subset $\Phi_L$, i.e.,
$$\lfloor\frac{n}{L}\rfloor\geq r.$$
For example, when $k=6$,
$r = 5$  and $L=2$, the $n = 11$ nodes are divided into $L=2$ subsets $\Phi_1=\{1,2,3,4,5,6\}$
and $\Phi_2=\{7,8,9,10,11\}$.

Recall that $|\Phi_i|\geq \lfloor\frac{n}{L}\rfloor\geq r\geq m$ and $\Phi_i$ contains $|\Phi_i|$ nodes, where $i = 1,2,\ldots,L$.
For $i = 1,2,\ldots,L$, define the first $m-i$ symbols of each node in $\Phi_i$ as
Protect Symbols (PS) which will be added to some parity symbols as piggyback function.
In the following, we present a method of designing the $(r-1)L$ piggyback functions that
are added to the $(r-1)L$ parity symbols such that the number of PS used in computing
each piggyback function as average as possible.

The total number of PS in $\Phi_i$ is $p_i = |\Phi_i|(m - i)$.
For $i = 1,2,\ldots,L$, denote the PS in column $j$ with $j=1,2,\ldots,m-i$ in row
$(\sum_{\alpha=1}^{i-1}|\Phi_{\alpha}|)+\ell$ with $\ell=1,2,\ldots,|\Phi_i|$ as $t_{i,(\ell - 1)(m-i)+j}$.
For example, when $i = 1$, we have
\begin{align*}
&(t_{1,1},t_{1,2},\ldots,t_{1,m - 1})=(a_{1,1},a_{2,1},\ldots,a_{m-1,1}),\\
&(t_{1,m},t_{1,m+1},\ldots,t_{1,p_1})=(a_{1,2},a_{2,2},\ldots,a_{m-1,|\Phi_1|}).
\end{align*}

For $1 \leq \alpha \leq r-1$ and $1 \leq \beta < L $, let
$s_{\alpha,\beta}=1+\lfloor \frac{p_{\beta}-\alpha}{r-1}\rfloor$ and
define the piggyback function
$g(\alpha, \beta)$ as,
\begin{equation}\label{eq1}
	g(\alpha, \beta) = \sum_{\ell = 1}^{s_{\alpha,\beta}} t_{\beta, (\ell - 1)(r-1)+\alpha}.
\end{equation}
For $1 \leq \alpha \leq r-1$ and $\beta = L$, we define the piggyback function $g(\alpha, \beta)$ as,
\begin{eqnarray}\label{eq2}
	g(\alpha,\beta)
	&=& \sum_{\ell = 1}^{s_{\alpha,\beta}} t_{\beta, (\ell-1)(r-1)+\alpha-((m-L)r\bmod (r-1))} \\ \nonumber
	&+& \sum_{x=1}^{r}\sum_{y=1}^{m-L}f_x(\boldsymbol{a}_y)\big(\mathbf{I}[x+y=\alpha + 1]
	+\mathbf{I}[x+y-(r-1) = \alpha + 1]\big), \nonumber
\end{eqnarray}
where $\mathbf{I}[\cdot]$ is characteristic function (i.e., $\mathbf{I}[A] = 1$ if $A$ is true,
otherwise $\mathbf{I}[A] = 0$) and
$$s_{\alpha,L} = 1 + \lfloor\frac{(|\Phi_L| - r)(m-L) - \alpha +(m-L)r \bmod (r-1)}{r-1}\rfloor.$$
Notice that we let $t_{L,\ell} = 0$ if $\ell \leq 0$ in Eq. \eqref{eq2}.
We add the piggyback function $g(\alpha, \beta)$ to the symbol in row $\alpha+k+1$ and column $m+1-\beta$.

We denote the above designed piggybacking codes as $\mathcal{C}_1(n, k, m,L)$.
Fig. \ref{fig:1} shows the construction structure of $\mathcal{C}_1(n, k, m,L)$.
When $m=r$, the piggyback structure of our $\mathcal{C}_1(n, k, m=r,L)$ is quite similar
the conference version \cite{Shi2206:Piggybacking}. One difference is that
invertible transformation is used in codes \cite{Shi2206:Piggybacking}, while not
in our $\mathcal{C}_1(n, k, m=r,L)$. This is why we can jointly design piggyback
function and transformation function for repair bandwidth reduction (please refer to
Section \ref{sec:framework2} for details), however the jointly design is not suitable for codes \cite{Shi2206:Piggybacking}.

\begin{figure*}[htbp]
	\centering
	\includegraphics[width=1\textwidth]{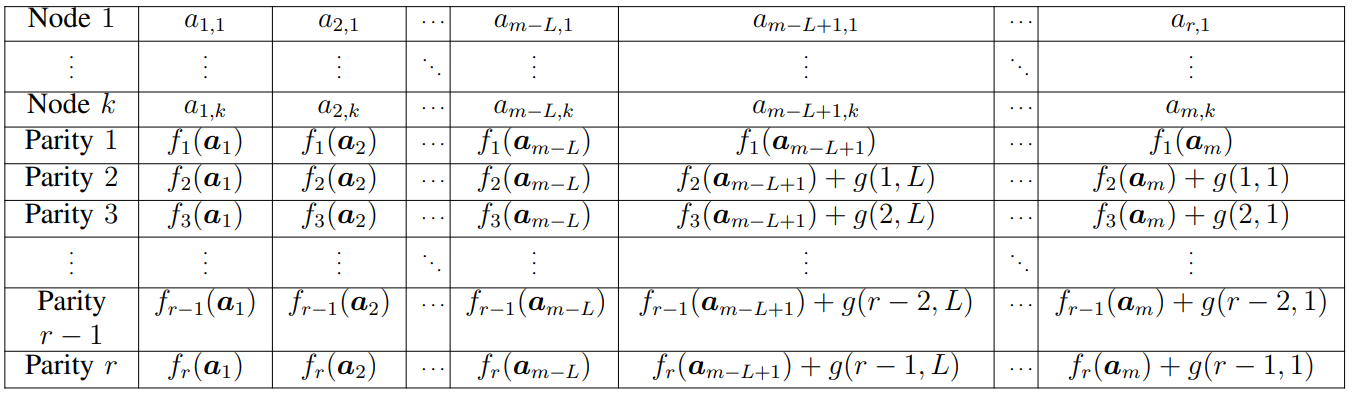}
	\caption{The construction structure of $\mathcal{C}_1(n,k,m,L)$.}
	\label{fig:1}
\end{figure*}

\begin{example}\label{ex2}
Consider the example of $\mathcal{C}_1(11, 6, 4,2)$, which is shown in Fig. \ref{fig:2}.
We divide the $n=11$ nodes into $L = 2$ disjoint subsets, $\Phi_1 = \{1,2,3,4,5,6\}$ and
$\Phi_2 = \{7,8,9,10,11\}$. By Eq. \eqref{eq1} and Eq. \eqref{eq2}, we have
	\begin{eqnarray}\nonumber
		g(1,1) &=& a_{1,1} + a_{2,2} + a_{3,3} + a_{1,5} + a_{2,6},\\ \nonumber
		g(2,1) &=& a_{2,1} + a_{3,2} + a_{1,4} + a_{2,5} + a_{3,6},\\ \nonumber
		g(3,1) &=& a_{3,1} + a_{1,3} + a_{2,4} + a_{3,5},\\ \nonumber
		g(4,1) &=& a_{1,2} + a_{2,3} + a_{3,4} + a_{1,6},\\ \nonumber
		g(1,2) &=& f_1(\boldsymbol{a}_1) + f_4(\boldsymbol{a}_2) + f_5(\boldsymbol{a}_1),\\ \nonumber
		g(2,2) &=& f_1(\boldsymbol{a}_2) + f_2(\boldsymbol{a}_1) + f_5(\boldsymbol{a}_2),\\ \nonumber
		g(3,2) &=& f_2(\boldsymbol{a}_2) + f_3(\boldsymbol{a}_1),\\ \nonumber
		g(4,2) &=& f_3(\boldsymbol{a}_2) + f_4(\boldsymbol{a}_1). \nonumber
	\end{eqnarray}	
	
	\begin{figure}[htbp]
		\centering
		\includegraphics[width=0.8\textwidth]{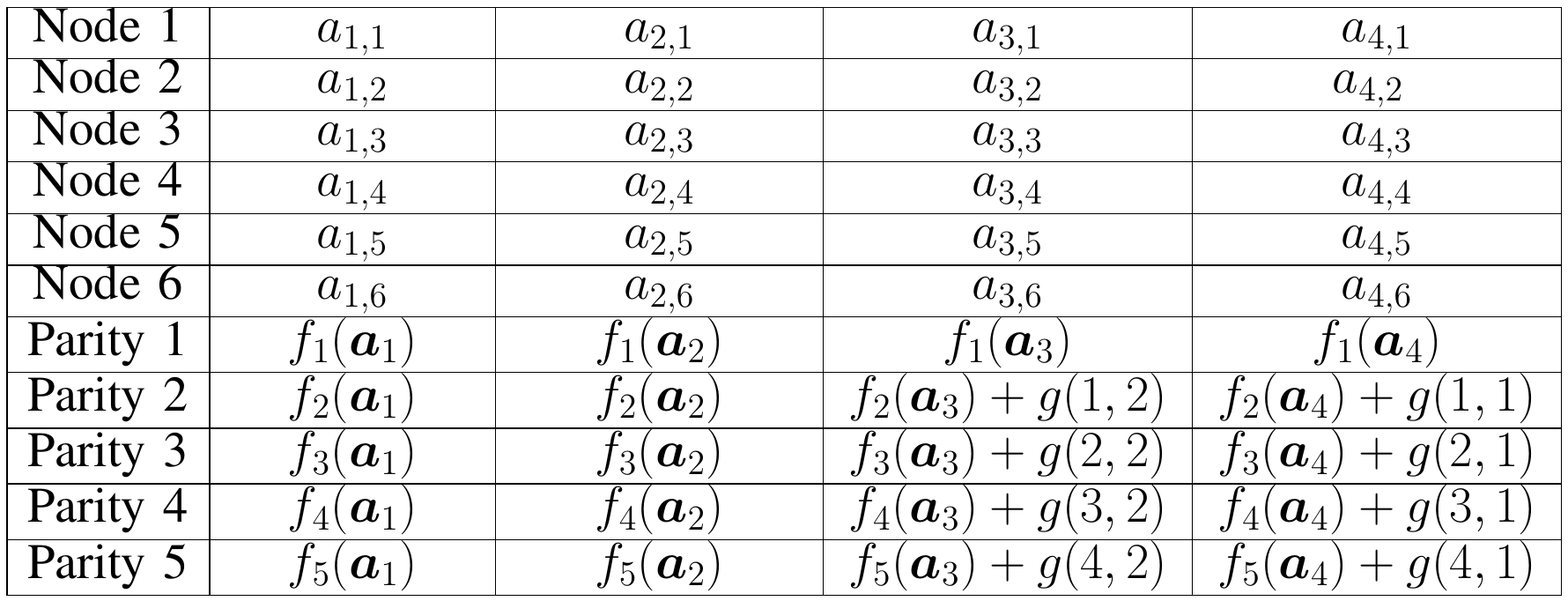}
		\caption{The example of $\mathcal{C}_1(n = 11, k = 6, m = 4,L=2)$.}
		\label{fig:2}
	\end{figure}
\end{example}

According to the above definition of the piggyback function, we can easily know that
any two data symbols in the same row are not used to
compute one piggyback function because of $m \leq r$.
In the next lemma, we show that this is also true for parity nodes.

\begin{lemma}
When $r \geq 4$, we have,
\begin{description}

\item[(i)]
Any two parity symbols in the same row are not used to compute the same
piggyback function.
\item[(ii)] Any parity symbol in a row used in computing a piggyback function is
not in the same row of the piggyback function.
\end{description}
\label{lm:piggy}
\end{lemma}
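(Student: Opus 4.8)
The plan is to show first that parity symbols occur in the piggyback functions only through the functions $g(\alpha,L)$ of Eq.~\eqref{eq2}, and then to make the occurrence of each parity symbol fully explicit. For $\beta<L$, $g(\alpha,\beta)$ in Eq.~\eqref{eq1} is a combination of protect symbols of $\Phi_\beta$, and since all $r$ parity nodes lie in $\Phi_L$ no parity symbol appears there. For $\beta=L$, I would bound the index range of the first sum $\sum_{\ell=1}^{s_{\alpha,L}}t_{L,(\ell-1)(r-1)+\alpha-((m-L)r\bmod(r-1))}$: unwinding the floor in the definition of $s_{\alpha,L}$ shows the largest index reached is at most $(|\Phi_L|-r)(m-L)$, and the first $(|\Phi_L|-r)(m-L)$ protect symbols of $\Phi_L$ are precisely those of its data nodes, which occupy the first $|\Phi_L|-r$ positions of $\Phi_L$; indices that are $\le 0$ contribute $0$ by convention. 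Hence the only parity symbols in $g(\alpha,L)$ are those supplied by the double sum $\sum_{x=1}^r\sum_{y=1}^{m-L}f_x(\boldsymbol{a}_y)\big(\mathbf{I}[x+y=\alpha+1]+\mathbf{I}[x+y-(r-1)=\alpha+1]\big)$.

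The combinatorial core is then: for $1\le x\le r$ and $1\le y\le m-L$, the symbol $f_x(\boldsymbol{a}_y)$ sits in row $k+x$ (parity node $x$), and it appears in $g(\alpha,L)$ iff $\alpha=x+y-1$ when $x+y\le r$, or $\alpha=x+y-r$ when $x+y\ge r+1$. Because $x+y\le r+(m-L)\le 2r-1$ (using $m\le r$), exactly one of the two indicators can fire, so each parity symbol lies in exactly one $g(\alpha,L)$; denote that $\alpha$ by $\phi(x,y)$.

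For (i): if $f_x(\boldsymbol{a}_{y_1})$ and $f_x(\boldsymbol{a}_{y_2})$ with $y_1\neq y_2$ were both in $g(\alpha,L)$, then $\phi(x,y_1)=\phi(x,y_2)$. When $x+y_1,x+y_2$ are both $\le r$ or both $\ge r+1$ this forces $y_1=y_2$; in the mixed case, say $x+y_1\le r<x+y_2$, it gives $x+y_1-1=x+y_2-r$, i.e.\ $y_2-y_1=r-1$, impossible since $1\le y_i\le m-L$ implies $|y_1-y_2|\le m-L-1\le r-2$. For (ii): $g(\alpha,L)$ is added to the symbol in row $k+\alpha+1$, i.e.\ parity node $\alpha+1$; a parity symbol of that node used in $g(\alpha,L)$ would be $f_{\alpha+1}(\boldsymbol{a}_y)$ with $\phi(\alpha+1,y)=\alpha$, which forces $y=0$ (if $(\alpha+1)+y\le r$) or $y=r-1$ (if $(\alpha+1)+y\ge r+1$), both contradicting $1\le y\le m-L\le r-2$. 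This yields both statements.

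The main obstacle I anticipate is not the inequality chasing in (i)--(ii) but pinning down which symbols the first sum of Eq.~\eqref{eq2} actually touches: one must verify that the modular shift by $(m-L)r\bmod(r-1)$ together with the cutoff $s_{\alpha,L}$ keeps that sum strictly inside the data-node protect symbols of $\Phi_L$, so that the parity content of $g(\alpha,L)$ is read off purely from the indicator double sum; and one must keep the two wrap-around regimes $x+y\le r$ and $x+y\ge r+1$ cleanly separated throughout. The separation, and the bound $m-L\le r-2$ it rests on, is exactly where the hypothesis $r\ge 4$ (together with $m\le r$ and the admissible range of $L$) is used.
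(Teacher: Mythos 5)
Your proposal is correct and takes essentially the same route as the paper's own proof: the same indicator equations $x+y=\alpha+1$ and $x+y-(r-1)=\alpha+1$, the same contradictions $y_2-y_1=r-1$ for (i) and $y=0$ or $y=r-1$ for (ii), with your added (and welcome) explicit check that the first sum in Eq.~\eqref{eq2} only reaches data-node protect symbols of $\Phi_L$. One caveat applying equally to the paper: the bound $m-L\le r-2$ that (ii) rests on follows from $m\le r$ and $1\le L<m$ only outside the corner case $(m,L)=(r,1)$, which both your argument and the paper's implicitly exclude.
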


\begin{proof}
Since $|\Phi_i|\geq r\geq m$, the piggyback functions which are computed from parity symbols are
$g(\alpha,L)$ with $1 \leq \alpha \leq r-1$.

Consider the first claim. Suppose that two parity symbols
$f_{x}(\boldsymbol{a}_{y_1})$ and $f_{x}(\boldsymbol{a}_{y_2})$ that are in the same row
which are used to compute
the piggyback function $g(\alpha,L)$, where $x\in\{1,2,\ldots,r\}$ and
$1\leq y_1 < y_2\leq m-L $, then
	\begin{eqnarray}\nonumber
		& x +y_1 = \alpha+1, \\ \nonumber
		& x +y_2 -r+1  = \alpha+1.  \nonumber
	\end{eqnarray}
We have that $y_2 - y_1 = r-1$, which contradicts with $r\geq m$ and $L\geq 1$.

Consider the second claim.
Suppose that the parity symbol $f_x(\boldsymbol{a}_y)$ is used to compute the piggyback
function $g(\alpha, \beta)$ in the same row, we have $x = \alpha + 1$, where $x\in\{1,2,\ldots,r\}$,
$y\in\{1,2,\ldots,m-L\}$, $x+y=\alpha+1$ or $x+y-(r-1) = \alpha+1$.
We can obtain that $y = 0$ or $y = r - 1$, which contradicts with $1 \leq y \leq m-L$,
$r\geq m$ and $L\geq 1$.
\end{proof}

By Lemma \ref{lm:piggy}, we can repair any single-node failure by employing the
piggyback functions to reduce repair bandwidth and we present the repair method
in the next section.

\section{Repair Method of Codes $\mathcal{C}_1(n, k, m,L)$}\label{sec:repair1}

In this section, we present the repair method for any single-node failure of the proposed code $\mathcal{C}_1(n, k, m,L)$ and show the repair bandwidth.

\subsection{Repair Method for Data Nodes} \label{py1dt}

Suppose that node $t \in \Phi_i$ fails, where $t \in \{1, 2,\ldots, k\}$ and $i \in \{1,2,\ldots,L\}$, the repair method is as follows.

\begin{enumerate}
	\item We download $ki$ symbols in the last $i$ columns of the first $k + 1$ rows except row $t$ to recover the symbols  $a_{m-i+1,t},\ldots,a_{m,t}$ and $f_x(\boldsymbol{a}_i)$ for $ x = 2,3,\ldots,r$.
	
	\item We download the parity symbols of which the corresponding piggyback functions
containing symbols of node $t$ and symbols contained by these piggyback functions except
symbols of node $t$, together with $f_x(\boldsymbol{a}_i)$
for $x = 2, 3, \ldots, r$, we can recover $a_{1,t}, a_{2,t},\ldots, a_{m-i,t}.$
	
\end{enumerate}

Consider the code $\mathcal{C}_1(11, 6, 4,2)$ in Example \ref{ex2}, we have $L = 2$
disjoint subsets $\Phi_1 = \{1,2,3,4,5,6\}$ and $\Phi_2 = \{7,8,9,10,11\}$.

Suppose that node $1$ fails, we can recover the five symbols
\[
a_{4,1},f_2(\boldsymbol{a}_4),f_3(\boldsymbol{a}_4),f_4(\boldsymbol{a}_4),f_5(\boldsymbol{a}_4)
\]
by downloading the following six symbols
\[
a_{4,2}, a_{4,3}, a_{4,4}, a_{4,5}, a_{4,6},f_1(\boldsymbol{a}_4).
\]
Note that the erased three symbols $a_{1,1}$, $a_{2,1}$ and $a_{3,1}$ are used in computing the three
piggyback functions $g(1,1)$, $g(2,1)$ and $g(3,1)$, respectively. We download the three parity
symbols
$$f_2(\boldsymbol{a}_4)+g(1,1), f_3(\boldsymbol{a}_4) + g(2,1), f_4(\boldsymbol{a}_4)+ g(3,1)$$
and the following symbols
$$a_{2,2},a_{3,3},a_{1,5},a_{2,6},a_{3,2},a_{1,2},a_{2,5},a_{3,6},a_{1,3},a_{2,4},a_{3,5}$$
that are used in computing three piggyback functions $g(1,1)$, $g(2,1)$ and $g(3,1)$,
together with $f_2(\boldsymbol{a}_4),f_3(\boldsymbol{a}_4),f_4(\boldsymbol{a}_4)$, to recover
the three symbols $a_{1,1}$, $a_{2,1}$ and $a_{3,1}$. The repair bandwidth of node $1$ is $20$ symbols.

We can repair each of the other data nodes similarly and we can calculate that the repair
bandwidth of each node in $\{2,5,6\}$ is 20 symbols, the repair bandwidth of each node in
$\{3,4\}$ is 19 symbols.

\subsection{Repair Method for Parity Nodes} \label{py1pt}
Suppose that node $t$ fails, where $t \in \{k+1, k+2,\ldots, k+r\}$ and $t\in \Phi_L$,
the repair method is given as follows.

\begin{enumerate}
	\item We download $kL$ symbols in the last $L$ columns of the first $k$ rows to recover $f_{t-k}(\boldsymbol{a}_{j})$ with $j = m-L+1, m-L+2,\ldots,m$ and $f_{x}(\boldsymbol{a}_{m-L+1})$
with $x = 1,2,\ldots,t-k-1,t-k+1,\ldots,r$.
	\item Note that the $m-L$ erased symbols
$f_{t-k}(\boldsymbol{a}_1),f_{t-k}(\boldsymbol{a}_2),\ldots,f_{t-k}(\boldsymbol{a}_{m-L})$
are used in computing the $m-L$ piggyback functions $g((t-k-1+s) \bmod (r-1), L)$ with
$s=1,2,\ldots, m-L$.
We can recover $m-L$ symbols
$f_{t-k}(\boldsymbol{a}_1),f_{t-k}(\boldsymbol{a}_2),\ldots,f_{t-k}(\boldsymbol{a}_{m-L})$
by downloading the $m-L$ symbols
$f_{(t-k+s)\bmod (r-1)}(\boldsymbol{a}_{m-L+1}) + g((t-k-1+s)\bmod (r-1), L)$ with
$s=1,2,\ldots, m-L$ and the symbols used in computing the piggyback functions
$g((t-k-1+s) \bmod (r-1), L)$ with $s=1,2,\ldots, m-L$.	
	\item We can recover the last $L$ symbols in node $t$ by downloading the symbols used in
computing the $L$ piggyback functions added in node $t$, together with the computed symbols
in the first step.
\end{enumerate}

Continue the code in Example \ref{ex2}. Suppose that parity $1$ fails.
According to the above repair method, in the first step, we can compute the four symbols
$f_1(\boldsymbol{a}_3), f_1(\boldsymbol{a}_4),f_2(\boldsymbol{a}_3), f_3(\boldsymbol{a}_3)$ by
downloading the following 12 symbols
	\begin{eqnarray}\nonumber
		&a_{3,1},a_{3,2},a_{3,3},a_{3,4},a_{3,5},a_{3,6}, \\ \nonumber
		&a_{4,1},a_{4,2},a_{4,3},a_{4,4},a_{4,5},a_{4,6}.  \nonumber
	\end{eqnarray}
By the second step, we can recover the erased two symbols $f_1(\boldsymbol{a}_1),f_1(\boldsymbol{a}_2)$
by downloading $$f_4(\boldsymbol{a}_2), f_5(\boldsymbol{a}_1), f_2(\boldsymbol{a}_1),f_5(\boldsymbol{a}_1),f_2(\boldsymbol{a}_3)+g(1,2), f_3(\boldsymbol{a}_3) + g(2,2),$$ since
\begin{align*}
g(1,2) =& f_1(\boldsymbol{a}_1) + f_4(\boldsymbol{a}_2) + f_5(\boldsymbol{a}_1),\\
g(2,2) =& f_1(\boldsymbol{a}_2) + f_2(\boldsymbol{a}_1) + f_5(\boldsymbol{a}_2).
\end{align*}
There is no piggyback function added in the parity 1 and we have recovered the last
two symbols in parity 1, the third step is not necessary in repairing parity 1.

Suppose that parity $2$ fails.
By the first step, we can compute the four symbols $f_2(\boldsymbol{a}_3), f_2(\boldsymbol{a}_4)$, $f_3(\boldsymbol{a}_3), f_4(\boldsymbol{a}_3)$ by downloading the following 12 symbols
	\begin{eqnarray}\nonumber
		&a_{3,1},a_{3,2},a_{3,3},a_{3,4},a_{3,5},a_{3,6}, \\ \nonumber
		&a_{4,1},a_{4,2},a_{4,3},a_{4,4},a_{4,5},a_{4,6}.  \nonumber
	\end{eqnarray}	
By the second step, we recover two symbols $f_2(\boldsymbol{a}_1),f_2(\boldsymbol{a}_2)$ by
downloading the following symbols
$$f_1(\boldsymbol{a}_2), f_5(\boldsymbol{a}_2), f_3(\boldsymbol{a}_1),f_3(\boldsymbol{a}_3)+g(2,2), f_4(\boldsymbol{a}_3) + g(3,2).$$
By the third step, we can recover the two symbols
$f_2(\boldsymbol{a}_3) +g(1,2),f_2(\boldsymbol{a}_4)+g(1,1)$
by downloading the symbols $$a_{1,1},a_{2,2},a_{1,5},a_{2,6},f_2(\boldsymbol{a}_2),f_3(\boldsymbol{a}_1),$$ together with $a_{3,3}$ and $f_2(\boldsymbol{a}_3),f_2(\boldsymbol{a}_4)$.

We can calculate that the repair bandwidth of parity 2 is $23$ symbols. Similarly, we can
show that the repair bandwidth of each node in parity $\{3,5\}$ is $24$ symbols and the repair bandwidth of
parity $4$ is $23$ symbols.

\subsection{Average Repair Bandwidth Ratio of $\mathcal{C}_1(n,k,m,L)$}
In the following, we analyse the repair bandwidth of our codes $\mathcal{C}_1(n,k,m,L)$.
We define the {\em average repair bandwidth ratio} of all nodes  as the ratio of the average repair bandwidth of $n$ nodes to the number of data symbols $kr$.

\begin{lemma}\label{lemma3}
	If $L$ is a factor of $n$, the lower bound $\gamma_{min}^{all}$ and the upper bound
$\gamma_{max}^{all}$ of the average repair bandwidth ratio of all nodes of $\mathcal{C}_1(n, k, m, L)$ is
	\begin{eqnarray}\nonumber
		&\gamma_{min}^{all} = \frac{L+1}{2m} + \frac{(k+r)(m^2-m(L+1)+\frac{(L+1)(2L+1)}{6})}{Lmk(r-1)} + \frac{m-L}{km}, \\ \nonumber
		&\gamma_{max}^{all} = \gamma_{min}^{all} + \frac{L(r-1)^2}{4(k+r)mk}. \nonumber
	\end{eqnarray}
\end{lemma}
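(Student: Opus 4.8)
The plan is to compute, for each of the $n$ nodes, the exact number of symbols downloaded by the repair procedures of Sections~\ref{py1dt} and~\ref{py1pt}, add these $n$ counts, and divide by $n$ and by the number $km$ of data symbols. The hypothesis $L\mid n$ enters only through $|\Phi_i|=n/L$, hence $p_i=\tfrac nL(m-i)$ for every $i$; this uniformity is what allows the resulting sums to be put in closed form. I take the repair procedures as given (their correctness rests on Lemma~\ref{lm:piggy}) and only count.

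Step~A, data nodes: for a data node $t$ at position $\ell$ of $\Phi_i$, the first step costs $ki$ symbols, and for each of the $m-i$ piggyback functions $g(\alpha,i)$ meeting a symbol of $t$ the second step costs one modified parity symbol plus the remaining $s_{\alpha,i}-1$ protect symbols of that function, so $B(t)=ki+\sum_{\alpha\in S_t}s_{\alpha,i}$ with $S_t$ a window of $m-i$ indices modulo $r-1$. The crucial identity is that, as $t$ ranges over the data nodes of $\Phi_i$ with $i<L$, the windows $S_t$ partition the protect symbols of $\Phi_i$, so $\sum_{t\in\Phi_i}\sum_{\alpha\in S_t}s_{\alpha,i}=\sum_{\alpha=1}^{r-1}s_{\alpha,i}^{2}$; for $\Phi_L$ only its first $|\Phi_L|-r$ positions are data nodes and their functions additionally carry the wrapped $f$-symbol diagonals of Eq.~\eqref{eq2}, which I account for separately.

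Step~B, parity nodes, and Step~C, assembly: for a failed parity node $k+\tau$ the three steps of Section~\ref{py1pt} cost $kL$, then $m-L$ modified parity symbols together with the symbols of the $m-L$ functions $g((\tau-1+s)\bmod(r-1),L)$, and finally the symbols of the $L$ functions $g(\tau-1,\beta)$, $\beta=1,\dots,L$, added into the failed node (none when $\tau=1$). Using $\sum_\alpha s_{\alpha,\beta}=p_\beta$ and the fact that each $f_x(\boldsymbol{a}_y)$ lies in exactly one function $g(\cdot,L)$, these costs turn into expressions in the $s_{\alpha,\beta}$ and in the per-function counts of diagonal $f$-terms, and for $\beta=L$ the data protect symbols and the parity protect symbols, interleaved by the offset $-((m-L)r\bmod(r-1))$ in Eq.~\eqref{eq2}, combine so that the data-node and parity-node contributions build the same $\tfrac{p_L^{2}}{r-1}$ main term. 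Summing everything: the first steps contribute $\sum_{i<L}\tfrac nL ki+(\tfrac nL-r)kL+rkL=\tfrac{kn(L+1)}{2}$, which normalizes to $\tfrac{L+1}{2m}$; the piggyback totals have main term $\tfrac1{r-1}\sum_{\beta=1}^{L}p_\beta^{2}=\tfrac{n^{2}}{L^{2}(r-1)}\sum_{j=m-L}^{m-1}j^{2}$, and $\sum_{j=m-L}^{m-1}j^{2}=L\big(m^{2}-m(L+1)+\tfrac{(L+1)(2L+1)}{6}\big)$ yields the second term; the leftover modified-parity and diagonal downloads give the third term $\tfrac{m-L}{km}$.

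Step~D, the gap, and the obstacle: every rounding only enlarges a count, e.g.\ $\sum_\alpha s_{\alpha,\beta}^{2}=\tfrac{p_\beta^{2}}{r-1}+\tfrac{\rho_\beta(r-1-\rho_\beta)}{r-1}$ with $\rho_\beta=p_\beta\bmod(r-1)$, and the analogous window-coverage and $f$-term roundings are nonnegative, so the main term is the lower bound $\gamma_{min}^{all}$; since $\rho(r-1-\rho)\le\tfrac{(r-1)^2}{4}$ and the $L$ subsets each incur at most one such error, the total rounding error is at most $\tfrac{L(r-1)^2}{4}$, i.e.\ $\tfrac{L(r-1)^2}{4(k+r)mk}$ after normalization, giving $\gamma_{max}^{all}$. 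I expect the main obstacle to be Step~B: because $g(\alpha,L)$ mixes protect symbols with the two wrapped $f$-symbol diagonals, one must check---using Lemma~\ref{lm:piggy} and the offset $-((m-L)r\bmod(r-1))$---that no symbol is double-counted and that the index windows used in the second and third steps of the parity repair are disjoint, after which the sums telescope; the rest is the routine bookkeeping sketched above.
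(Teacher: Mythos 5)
Your proposal is correct and takes essentially the same route as the paper's proof: count the per-step downloads, use the double-counting identity that the piggyback-related downloads total $\sum_{\alpha}$ (function size)$^2$ per group, exploit that sizes within a group differ by at most one so this equals $\frac{p_\beta^2+\rho_\beta(r-1-\rho_\beta)}{r-1}$, and normalize, with the first-step total giving $\frac{L+1}{2m}$, the main square term giving the middle term, and the residual parity-repair downloads $(m-L)(k+r)$ giving $\frac{m-L}{km}$. The only wrinkle is in your Step D: by your own identity the per-group excess is $\frac{\rho_\beta(r-1-\rho_\beta)}{r-1}\leq\frac{r-1}{4}$, so the unnormalized gap is really at most $\frac{L(r-1)}{4}$; your stated $\frac{L(r-1)^2}{4}$ is loose by a factor $r-1$, but it is still a valid upper bound and coincides with the (equally loose) $\gamma_{max}^{all}$ asserted in the paper.
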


\begin{proof}
Denote the number of symbols used in computing the piggyback function $g(\alpha,r-m+\beta)$
as $n_{\alpha,\beta}$, where $\alpha = 1,2,\ldots,r-1$ and $\beta = 1,2,\ldots,L$.
According to the definition of the piggyback function in Eq. \eqref{eq1} and Eq. \eqref{eq2},
we can know that $(n_{\alpha_{1},\beta}-n_{\alpha_{2},\beta})^2 \in \{0,1\}$ for
$\alpha_1 \neq \alpha_2\in\{1,2,\ldots,r-1\}$ and $\beta = 1,2,\ldots,L$.

According to the repair methods in Section \ref{py1dt} and \ref{py1pt}, we need to
download $\sum_{\beta =1}^{L}|\Phi_\beta|k\beta$ symbols in repairing each of the
$n$ nodes in first step, download $\sum_{\beta=1}^{L}\sum_{\alpha=1}^{r-1}n_{\alpha,\beta}^2$
symbols in repairing each of the $n$ nodes in the second step and download
$(m-L)(k+r)$ symbols in repairing each of the $r$ parity nodes in the third step in total.
Therefore, the average repair bandwidth ratio for all nodes $\gamma^{all}$ is
$$\gamma^{all} = \frac{\sum_{\beta=1}^{L}(|\Phi_\alpha|k\beta +
\sum_{\alpha=1}^{r-1}n_{\alpha,\beta}^2) + (m-L)(k+r)}{(k+r)mk}.$$
Since the equation
$$(\sum_{\alpha=1}^{r-1}n_{\alpha,\beta})^2 +
\sum_{\alpha_1\neq \alpha_2\in\{1,2,\ldots,r-1\}}(n_{\alpha_1,\beta}-n_{\alpha_2,\beta})^2
= (r-1)\sum_{\alpha=1}^{r-1}n^2_{\alpha,\beta}$$
holds, we can obtain that
$$\sum_{\alpha=1}^{r-1}n_{\alpha,\beta}^2 =
\frac{(\sum_{\alpha=1}^{r-1}n_{\alpha,\beta})^2 +
\sum_{\alpha_1\neq \alpha_2\in\{1,2,\ldots,r-1\}}(n_{\alpha_1,\beta}-n_{\alpha_2,\beta})^2}{r-1}$$
for $\beta = 1,2,\ldots,L$.

Let $v_\beta = |\Phi_\beta|(m-\beta) - \lfloor\frac{|\Phi_\beta|(m-\beta)}{r-1}\rfloor (r-1)$,
where $\beta = 1,2,\ldots,L$. Because of $\sum_{\alpha=1}^{r-1}n_{\alpha,\beta} =
|\Phi_\beta|(m-\beta)$ and $\sum_{\alpha_1,\alpha_2=1,\alpha_1\neq \alpha_2}^{r-1}
(n_{\alpha_1,\beta}-n_{\alpha_2,\beta})^2 = v_\beta(r-1-v_\beta)$, we can
calculate that
$$\sum_{\alpha=1}^{r-1}n_{\alpha,\beta}^2 =
\frac{(|\Phi_\beta|(m-\beta))^2 + v_\beta(r-1-v_\beta)}{r-1}.$$
Therefore, we can get  $0 \leq v_\beta(r-1-v_\beta)\leq (\frac{r-1}{2})^2$ and
further obtain the lower bound and upper bound in the lemma.
\end{proof}

By Lemma \ref{lemma3}, we can know $|\gamma_{min}^{all} -
\gamma^{all}| \leq |\gamma_{max}^{all} - \gamma_{min}^{all}| = \frac{L(r-1)^2}{4(k+r)mk}$.
When $k >> r$, we have $\frac{L(r-1)^2}{4(k+r)mk} \rightarrow 0$. Therefore, we have
$\gamma^{all}= \gamma_{min}^{all}$ when $k >> r$.

\begin{lemma}\label{lemma4}
When $L$ is a factor of $n$ and $k >> r$, the minimum value of the average repair
bandwidth ratio $\gamma^{all}$ of  $\mathcal{C}_1(n, k, m, L)$ is achieved when $L = \sqrt{\frac{6m^2-6m+1}{3r-1}}$.
\end{lemma}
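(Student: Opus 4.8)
The plan is to start from the closed form of $\gamma_{min}^{all}$ in Lemma~\ref{lemma3}, reduce it (in the regime $k\gg r$) to a one-variable function of $L$ of the shape $AL+B/L+C$, and then minimize by AM--GM. First I would invoke Lemma~\ref{lemma3}: since the gap $\gamma_{max}^{all}-\gamma_{min}^{all}=\frac{L(r-1)^2}{4(k+r)mk}$ vanishes as $k\to\infty$, we have $\gamma^{all}=\gamma_{min}^{all}$ in the limit $k\gg r$; in the same limit $\frac{k+r}{k}\to 1$ and $\frac{m-L}{km}\to 0$ (because $L\ge 1$, $m\le r\ll k$), so
\[
\gamma^{all}\;\longrightarrow\;\frac{L+1}{2m}+\frac{1}{Lm(r-1)}\Big(m^2-m(L+1)+\tfrac{(L+1)(2L+1)}{6}\Big)\;=:\;f(L).
\]

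Next I would expand and regroup $f(L)$ by powers of $L$, treating $L$ as a positive real variable. Using $(L+1)(2L+1)/6=L^2/3+L/2+1/6$ and $m(L+1)=mL+m$, the bracket in the second term of $f(L)$ becomes $m^2-mL-m+L^2/3+L/2+1/6$; dividing by $Lm(r-1)$ and adding $\tfrac{L+1}{2m}$, the coefficient of $L$ is
\[
\frac{1}{2m}+\frac{1}{3m(r-1)}=\frac{3r-1}{6m(r-1)}=:A,
\]
the coefficient of $1/L$ is
\[
\frac{m^2-m+1/6}{m(r-1)}=\frac{6m^2-6m+1}{6m(r-1)}=:B,
\]
and the remaining pieces $\tfrac{1}{2m}+\tfrac{1/2-m}{m(r-1)}$ form a constant $C$ independent of $L$. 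Both $A$ and $B$ are positive (for $r\ge 4$ and $m\ge 2$, one has $3r-1>0$ and $6m^2-6m+1>0$), so $f(L)=AL+B/L+C$ with $A,B>0$.

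Finally, for $A,B>0$ the function $AL+B/L$ on $L>0$ is convex and, by AM--GM, attains its unique global minimum at $L^{\ast}=\sqrt{B/A}$; substituting the values of $A$ and $B$ gives
\[
L^{\ast}=\sqrt{\frac{6m^2-6m+1}{3r-1}},
\]
which is precisely the claimed minimizer. I expect the only real obstacle to be the algebraic bookkeeping — keeping the $L$-, $1/L$- and constant-terms straight through the expansion of $(L+1)(2L+1)/6$ and the division by $Lm(r-1)$ — together with a small caveat that $L$ must in practice be a positive integer dividing $n$ with $L<m$, so the true discrete optimum is the feasible value of $L$ nearest $L^{\ast}$; the lemma as stated refers to the continuous minimizer, for which the AM--GM argument is exact.
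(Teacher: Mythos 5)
Your proposal is correct and follows essentially the same route as the paper: both invoke Lemma~\ref{lemma3} with $k\gg r$ to reduce $\gamma^{all}$ to the same one-variable function of $L$, and your coefficients $A=\frac{3r-1}{6m(r-1)}$ and $B=\frac{6m^2-6m+1}{6m(r-1)}$ reproduce exactly the paper's expression, yielding the same minimizer $L=\sqrt{\frac{6m^2-6m+1}{3r-1}}$. The only (cosmetic) difference is that you minimize $AL+B/L+C$ by AM--GM/convexity, whereas the paper sets $\frac{\partial \gamma^{all}}{\partial L}=0$ and checks the sign of the derivative on either side; your closing remark about rounding to an integer $L$ matches the paper's comment following the lemma.
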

\begin{proof}
When $k >> r$, we have $\gamma^{all} =\gamma^{all}_{min} $ . Then we can get
	$$\gamma^{all} = \frac{L+1}{2m} + \frac{m^2L-mL^2-mL+\frac{1}{3}L^3 +
\frac{1}{2}L^2+\frac{L}{6}}{L^2m(r-1)}.$$
We can calculate that
$$\frac{\partial \gamma^{all}}{\partial L} = \frac{1}{2m} +
\frac{\frac{1}{3} - \frac{m^2-m+\frac{1}{6}}{L^2} }{m(r-1)} = 0,$$
and further obtain $$L = \sqrt{\frac{6m^2-6m+1}{3r-1}}.$$
If $L > \sqrt{\frac{6m^2-6m+1}{3r-1}}$, then $\frac{\partial \gamma^{all}}{\partial L} > 0$;
if $L < \sqrt{\frac{6m^2-6m+1}{3r-1}}$, then $\frac{\partial \gamma^{all}}{\partial L} < 0$;
if $L = \sqrt{\frac{6m^2-6m+1}{3r-1}}$, then  $\frac{\partial \gamma^{all}}{\partial L} = 0$.
Therefore, when $L = \sqrt{\frac{6m^2-6m+1}{3r-1}} $, $\gamma^{all}$ achieves the minimum value.
\end{proof}

Since $L$ is a positive integer, we take $L = \lceil\sqrt{\frac{6m^2-6m+1}{3r-1}}\rceil$ or
$L = \lfloor\sqrt{\frac{6m^2-6m+1}{3r-1}}\rfloor $ to achieve the minimum repair bandwidth.

\section{The Second Piggybacking Codes}\label{sec:framework2}

In this section, we present construction of the second piggybacking codes that have
lower repair bandwidth than all the existing piggybacking codes.

The second piggybacking codes is an $n\times m$ array, where $m=sr=s(n-k)$ and $2\leq s\leq r$.
We label the index of the $n$ rows from $1$ to $n$ and the index of
the $m$ columns from 1 to $m$. The first $k$ nodes are data nodes that store data symbols
and the last $r$ nodes are parity nodes that store parity symbols. We divide the $k$ data
nodes into $L$ disjoint subsets $\Phi_1,\Phi_2, \ldots, \Phi_L$, where
$1\leq L< s$. Each of the first $k-\lfloor\frac{k}{L}\rfloor L$ subsets has size
$\lceil\frac{k}{L}\rceil$ and each of the last $(\lfloor\frac{k}{L}\rfloor+1)L-k$
subsets has size $\lfloor\frac{k}{L}\rfloor$, i.e., $|\Phi_i|=\lceil\frac{k}{L}\rceil$ for $i=1,2,\ldots,k-\lfloor\frac{k}{L}\rfloor L$ and $|\Phi_i|=\lfloor\frac{k}{L}\rfloor$ for $i=k-\lfloor\frac{k}{L}\rfloor L+1,\ldots,L$.
We divide the $sr$ columns into $s$ disjoint subsets $W_{\ell}=\{(\ell-1)r+1,(\ell-1)r+2,\ldots,\ell r\}$,
where $\ell=1,2,\ldots,s$.
The construction of the second piggybacking codes is as follows.

\begin{enumerate}
\item We create $m=sr$ instances of $\left(n,k\right)$ MDS code. Let
$(a^{(\ell)}_{i,1},a^{(\ell)}_{i,2},\ldots,a^{(\ell)}_{i,k},
f_1(\boldsymbol{a}^{(\ell)}_i),\ldots,f_r(\boldsymbol{a}^{(\ell)}_i))^T$ be an instance
of $\left(n,k\right)$ MDS code over $\mathbb{F}_{q}$ which is in column $(\ell-1)r +i$ of the $n\times sr$ array,
where $\boldsymbol{a}^{(\ell)}_{i} = (a^{(\ell)}_{i,1},a^{(\ell)}_{i,2},\ldots,a^{(\ell)}_{i,k})^T$ are data symbols,
$\ell = 1,2,\ldots,s$ and $i = 1,2,\ldots,r$.
\item We take $i-1$ cyclic-shift for each column of the $r$ parity symbols
$(f_1(\boldsymbol{a}^{(\ell)}_i),f_2(\boldsymbol{a}^{(\ell)}_i),\ldots,
f_{r}(\boldsymbol{a}^{(\ell)}_i))^T$ to obtain
$$(f_{((1-i)\bmod r)+1}(\boldsymbol{a}^{(\ell)}_i),f_{((2-i)\bmod r)+1}(\boldsymbol{a}^{(\ell)}_i),\ldots,f_r(\boldsymbol{a}^{(\ell)}_i),f_1(\boldsymbol{a}^{(\ell)}_i),
f_2(\boldsymbol{a}^{(\ell)}_i),\ldots,f_{r+1-i}(\boldsymbol{a}^{(\ell)}_i))^T,$$
where $\ell = 1,2,\ldots,s$ and
$i = 1,2,\ldots,r$.
\item For $i= 1,2,\ldots,L$, we define the first $(s-i)r$ symbols of the $|\Phi_i|$
nodes in $\Phi_i$ as Protect Symbols (PS). We can calculate that the total number of
PS in $\Phi_i$ is $p_i = |\Phi_i|(s - i)r$ and the total number of PS in all $k$ data nodes are
$\sum_{i=1}^{L}p_i =\sum_{i=1}^{L} |\Phi_i|(s - i)r$. For $i = 1,2,\ldots,L$, denote the PS in
column $j$ with $j=1,2,\ldots,(s-i)r$ in row $(\sum_{\alpha=1}^{i-1}|\Phi_{\alpha}|)+\ell$
with $\ell=1,2,\ldots,|\Phi_i|$ as $t_{i,(\ell - 1)(s-i)r+j}$. For example, when $i = 1$,
we have
\begin{align*}
&(t_{1,1},t_{1,2},\ldots,t_{1,(s-1)r})=(a^{(1)}_{1,1}, a^{(1)}_{2,1},\ldots,a^{(s-1)}_{r,1}),\\
&(t_{1,(s-1)r+1},\ldots,t_{1,p_1})=(a^{(1)}_{1,2},\ldots,a^{(s-1)}_{r,|\Phi_1|}).
\end{align*}
We define $(r-1)rL$ piggyback functions which will be added to the $(r-1)rL$ parity symbols in the
last $L$ subsets $W_i$ with $i=s-L+1,s-L+2,\ldots,s$ such that the number of PS used in
computing each piggyback function as average as possible. For $i=1,2,\ldots,L$ and
$j=1,2,\ldots,r(r-1)$, we define the piggyback function $g(j,s-i + 1)$ as
\begin{equation}\label{gx}
g(j,s-i + 1) = \sum_{\ell = 1}^{s_{i,j}} t_{i,(\ell-1)(r-1)r + j},
\end{equation}
where $s_{i,j} = 1 +\lfloor\frac{p_i-j}{r(r-1)}\rfloor$ and the $r(r-1)L$ piggyback functions
are added to the $r(r-1)L$ parity symbols in the last $L$ subsets $W_{\ell}$ with $\ell = s-L+1,s-L+2,\ldots,s$,
which is shown in Fig. \ref{pr}.
For notational convenience, denote the symbol in column $(\ell-1)r+i$ and row $j$ of the obtained
$n\times sr$ array as $f'_{x-k}(\boldsymbol{a}^{(\ell)}_i)$, where $j=k+1,k+2,\ldots,k+r$,
$\ell = 1,2,\ldots,s$ and $i = 1,2,\ldots,r$.
	\begin{figure*}[htbp]
		\centering
		\includegraphics[width=1\textwidth]{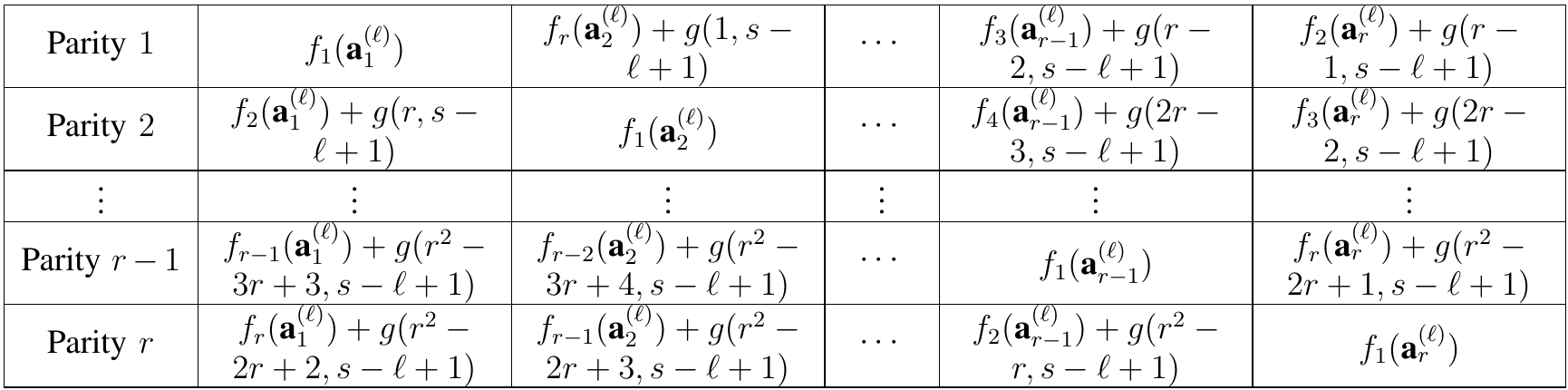}
		\caption{Piggyback functions added to the symbols in $W_\ell$, $\ell = s-L+1,s-L+2,\ldots,s$.}
		\label{pr}
	\end{figure*}
\item We replace the symbol $f'_{x}(\boldsymbol{a}^{(\ell)}_i)$ in column $(\ell-1)r+i$ and row $x+k$
by the {\em transformation function} $f''_x(\boldsymbol{a}^{(\ell)}_i)$ as follows,
	\begin{equation}
		f''_x(\boldsymbol{a}^{(\ell)}_i) =
		\begin{cases}
			f'_{x}(\boldsymbol{a}^{(\ell)}_i) + f'_i(\boldsymbol{a}^{(\ell)}_{x})  & \text{if} \quad x < i, \\
			\theta_{x,i}\cdot f'_{i}(\boldsymbol{a}^{(\ell)}_x) + f'_x(\boldsymbol{a}^{(\ell)}_{i})  & \text{if} \quad x > i, \\
			f'_{x}(\boldsymbol{a}^{(\ell)}_i) & \text{if} \quad x = i,
		\end{cases}
\label{eq:trans}	
\end{equation}
where $\theta_{x,i}\in \mathbb{F}_{q}\setminus\{0,1\}$ such that $\theta_{x,i}-1$ is invertible,
$x=1,2,\ldots,r$, $\ell = 1,2,\ldots,s$ and $j = 1,2,\ldots,r$.
\end{enumerate}
We denote the above construction of our second piggybacking codes as
$\mathcal{C}_2(n,k,m,L)$.
In the construction of $\mathcal{C}_2(n,k,m,L)$, we design the piggyback functions
in step three to repair the single-node failure of data nodes and employ the transformation
functions designed in step four to repair the single-node failure of parity nodes.
The piggyback structure of $\mathcal{C}_2(n,k,m,L)$ is similar to that of our first piggybacking
codes $\mathcal{C}_1(n,k,m,L)$. The difference is that the piggyback functions of
$\mathcal{C}_2(n,k,m,L)$ are designed for data node repair, while the piggyback functions of
$\mathcal{C}_1(n,k,m,L)$ are designed for the repair of both data node and parity node.

Given that $x>i$, we have $f''_x(\boldsymbol{a}^{(\ell)}_i)=\theta_{x,i}\cdot f'_{x}(\boldsymbol{a}^{(\ell)}_i) + f'_i(\boldsymbol{a}^{(\ell)}_{x})$
and $f''_i(\boldsymbol{a}^{(\ell)}_x)=f'_{i}(\boldsymbol{a}^{(\ell)}_x) + f'_x(\boldsymbol{a}^{(\ell)}_{i})$.
It is easy to check that we can compute $f'_{i}(\boldsymbol{a}^{(\ell)}_x)$ and $f'_x(\boldsymbol{a}^{(\ell)}_{i})$
from $f''_x(\boldsymbol{a}^{(\ell)}_i)$ and $f''_i(\boldsymbol{a}^{(\ell)}_x)$.
We can also compute $f''_i(\boldsymbol{a}^{(\ell)}_x)$ from any two out of the three symbols
\[
f'_{i}(\boldsymbol{a}^{(\ell)}_x),f'_x(\boldsymbol{a}^{(\ell)}_{i}),f''_x(\boldsymbol{a}^{(\ell)}_i).
\]

\begin{example}\label{ex3}
Consider the example of $(n,k,m,L)=(12,8,16,2)$, the $k=8$ data nodes are divided into
two subsets $\Phi_1 = \{1,2,3,4\}$ and $\Phi_2 = \{5,6,7,8\}$, and the $m=16$
columns are divided into four subsets $W_{\ell} = \{4(\ell-1)+1,4(\ell-1)+2,4(\ell-1)+3,4(\ell-1)+4\}$
with $\ell=1,2,3,4$. According to Eq. \eqref{gx}, the piggyback functions are defined as follows.
\begin{eqnarray}\nonumber
g(i,1) &=& a^{(1)}_{i,1} + a^{(1)}_{i,2} + a^{(1)}_{i,3} + a^{(1)}_{i,4} \text{ for } i=1,2,3,4,\\ \nonumber
g(4+i,1) &=& a^{(2)}_{i,1} + a^{(2)}_{i,2} + a^{(2)}_{i,3} + a^{(2)}_{i,4} \text{ for } i=1,2,3,4,\\ \nonumber
g(8+i,1) &=& a^{(3)}_{i,1} + a^{(3)}_{i,2} + a^{(3)}_{i,3} + a^{(3)}_{i,4} \text{ for } i=1,2,3,4,\\ \nonumber
g(i,2) &=& a^{(1)}_{i,5} + a^{(2)}_{i,6} + a^{(1)}_{i,8} \text{ for } i=1,2,3,4,   \\ \nonumber
g(4+i,2) &=& a^{(2)}_{i,5} + a^{(1)}_{i,7} + a^{(2)}_{i,8} \text{ for } i=1,2,3,4, \\ \nonumber
g(8+i,2) &=& a^{(1)}_{i,6} + a^{(2)}_{i,7}  \text{ for } i=1,2,3,4. \nonumber
\end{eqnarray}
The transformation functions are given in Eq. \eqref{eq:trans} with $x=1,2,3,4$, $i=1,2,3,4$
and $\ell=1,2,3,4$, where the symbol $f'_x(\boldsymbol{a}_{i}^{(\ell)})$ is as follows,
\begin{eqnarray}\nonumber
\left( f'_1(\boldsymbol{a}_{1}^{(\ell)}),f'_2(\boldsymbol{a}_{2}^{(\ell)}) \right)=& \left(f_1(\boldsymbol{a}_{1}^{(\ell)}), f_1(\boldsymbol{a}_{2}^{(\ell)})\right), \\ \nonumber
\left(f'_3(\boldsymbol{a}_{3}^{(\ell)}),f'_4(\boldsymbol{a}_{4}^{(\ell)})\right) =& \left(f_1(\boldsymbol{a}_{3}^{(\ell)}),f_1(\boldsymbol{a}_{4}^{(\ell)})\right) , \\ \nonumber
\left(f'_1(\boldsymbol{a}_{2}^{(\ell)}),f'_1(\boldsymbol{a}_{3}^{(\ell)})\right) =& \left(f_4(\boldsymbol{a}_{2}^{(\ell)}) + g(1,5 - \ell),f_4(\boldsymbol{a}_{3}^{(\ell)}) + g(2, 5 - \ell)\right),\\ \nonumber
\left(f'_1(\boldsymbol{a}_{4}^{(\ell)}),f'_2(\boldsymbol{a}_{1}^{(\ell)})\right) =& \left(f_4(\boldsymbol{a}_{2}^{(\ell)}) + g(3, 5 - \ell), f_2(\boldsymbol{a}_{1}^{(\ell)}) + g(4, 5 - \ell)\right) , \\ \nonumber
\left(f'_2(\boldsymbol{a}_{3}^{(\ell)}),f'_2(\boldsymbol{a}_{4}^{(\ell)})\right) =& \left(f_4(\boldsymbol{a}_{3}^{(\ell)}) + g(5, 5 - \ell), f_3(\boldsymbol{a}_{4}^{(\ell)}) + g(6, 5 - \ell)\right), \\ \nonumber
\left(f'_3(\boldsymbol{a}_{1}^{(\ell)}),f'_3(\boldsymbol{a}_{2}^{(\ell)})\right) =& \left(f_3(\boldsymbol{a}_{1}^{(\ell)}) + g(7, 5 - \ell), f_2(\boldsymbol{a}_{2}^{(\ell)}) + g(8, 5 - \ell)\right), \\ \nonumber
\left(f'_3(\boldsymbol{a}_{4}^{(\ell)}),f'_4(\boldsymbol{a}_{1}^{(\ell)})\right) =& \left(f_4(\boldsymbol{a}_{4}^{(\ell)}) + g(9, 5 - \ell), f_4(\boldsymbol{a}_{1}^{(\ell)}) + g(10, 5 - \ell)\right) , \\ \nonumber
\left(f'_4(\boldsymbol{a}_{2}^{(\ell)}),f'_4(\boldsymbol{a}_{3}^{(\ell)})\right) =& \left(f_3(\boldsymbol{a}_{2}^{(\ell)}) + g(11, 5 - \ell),  f_2(\boldsymbol{a}_{3}^{(\ell)})+ g(12, 5 - \ell)\right) , \nonumber
\end{eqnarray}
where $\ell = 1,2,3,4$ and $g(\cdot,x) = 0 $ for $x\geq 3$.

The code $\mathcal{C}_2(12,8,16,2)$	is shown in Fig. \ref{sr}.
\end{example}

\begin{figure*}[htbp]
	\centering
	\includegraphics[width=1.0\textwidth]{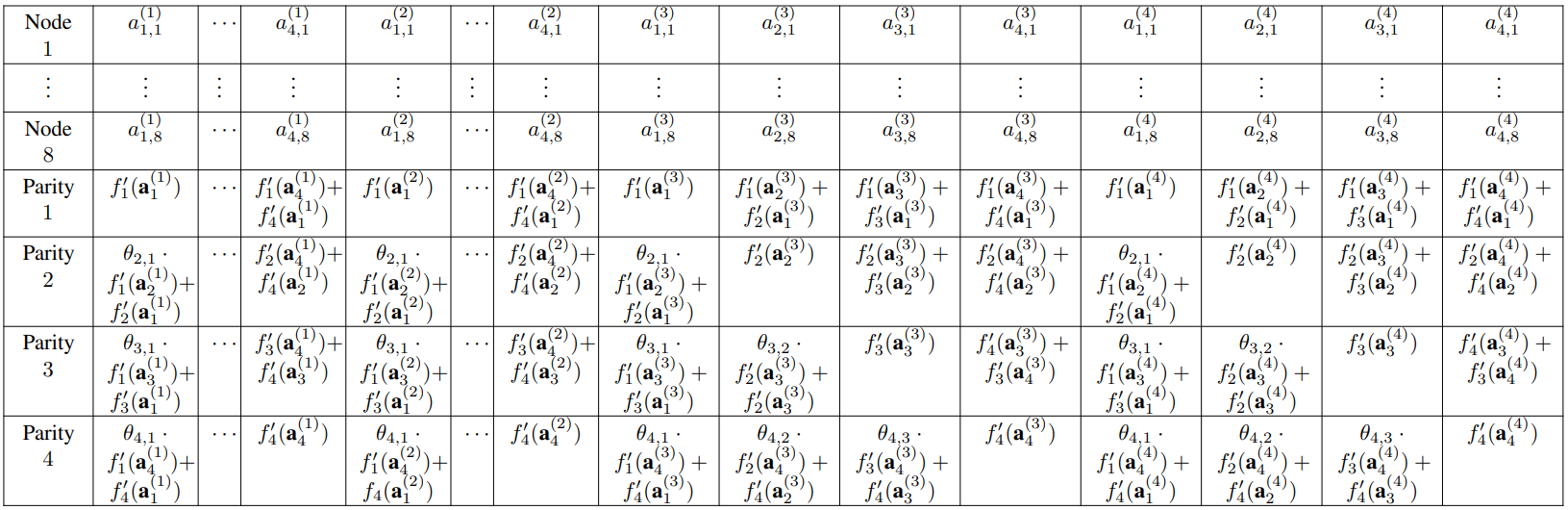}
	\caption{Code $\mathcal{C}_2(12, 8, 16, 2)$.}
	\label{sr}
\end{figure*}

\section{Repair Method of $\mathcal{C}_2(n,k,m,L)$}\label{sec:repair2}
In this section, we present repair method for any single-node erasure of
$\mathcal{C}_2(n,k,m,L)$.

\subsection{Repair Method for Data Node}
The repair method for data node of $\mathcal{C}_2(n,k,m,L)$ is quite similar to that
of $\mathcal{C}_1(n,k,m,L)$, because the two codes have the same structure of piggyback
function. The difference is that the piggyback function of $\mathcal{C}_1(n,k,m,L)$
is computed from both data symbols and parity symbols, while the piggyback function of
$\mathcal{C}_2(n,k,m,L)$ is computed from only data symbols.
Suppose that data node $t \in \Phi_u$ fails, where $t\in\{1,2,\ldots,k\}$ and
$u\in\{1,2,\ldots,L\}$, the repair method is as follows.
\begin{enumerate}
	\item Recall that there is one parity symbol in each column which is not added
by piggyback function. We can recover the last $ur$ erased symbols in node $t$ by
downloading $(k-1)ur$ data symbols in the last $ur$ columns in the first $k$ rows except
row $t$ and downloading the $ur$ parity symbols in the last $ur$ columns of
without attaching piggyback function.
	\item We download the parity symbols of which the corresponding piggyback functions
are computed from the symbols in node $t$, together with
$f_x(\boldsymbol{a}^{(\ell)}_i)$ with $x= 1,2,\ldots,r$, $i = 1,2,\ldots,r$ and
$\ell = s-u+1,\ldots,s$, to recover the first $(s-u)r$ symbols in node $t$.
\end{enumerate}

Continue the code in Example \ref{ex3}.
Suppose that node $t=1$ fails, we have $u=1$ and we can recover the last four symbols
$a^{(4)}_{i,1}$ with $i=1,2,3,4$ in node $1$ and the 16 parity symbols $f_{x}(\boldsymbol{a}^{(4)}_i)$
with $x = 1,2,3,4$ and $i = 1,2,3,4$, by downloading the 28 symbols $a^{(4)}_{i,j}$ with $j=2,3,\ldots,8$
and $i=1,2,3,4$, and the four parity symbols
$$f_1({\boldsymbol{a}}^{(4)}_1), f_1({\boldsymbol{a}}^{(4)}_2), f_1({\boldsymbol{a}}^{(4)}_3), f_1({\boldsymbol{a}}^{(4)}_4).$$
Then we download the following 12 parity symbols
\begin{align*}
& f'_1(\boldsymbol{a}^{(4)}_2)+f'_2(\boldsymbol{a}^{(4)}_1) , f'_1(\boldsymbol{a}^{(4)}_3)+f'_3(\boldsymbol{a}^{(4)}_1), \\
& f'_2(\boldsymbol{a}^{(4)}_3)+f'_3(\boldsymbol{a}^{(4)}_2) , f'_2(\boldsymbol{a}^{(4)}_4) + f'_4(\boldsymbol{a}^{(4)}_2),\\
&f'_4(\boldsymbol{a}^{(4)}_3) + f'_3(\boldsymbol{a}^{(4)}_4), f'_1(\boldsymbol{a}^{(4)}_4)+f'_4(\boldsymbol{a}^{(4)}_1) , \\
&\theta_{4,1}\cdot f'_1(\textbf{a}^{(4)}_4) + f'_4(\textbf{a}^{(4)}_1), \theta_{2,1}\cdot f'_1(\boldsymbol{a}^{(4)}_2) + f'_2(\boldsymbol{a}^{(4)}_1),\\
& \theta_{3,1}\cdot f'_1(\boldsymbol{a}^{(4)}_3) + f'_3(\boldsymbol{a}^{(4)}_1), \theta_{3,2}\cdot f'_2(\boldsymbol{a}^{(4)}_3) + f'_2(\boldsymbol{a}^{(4)}_3), \\
&\theta_{4,2}\cdot f'_2(\textbf{a}^{(4)}_4) + f'_4(\textbf{a}^{(4)}_2), \theta_{4,3}\cdot f'_3(\textbf{a}^{(4)}_4) + f'_4(\textbf{a}^{(4)}_3),
\end{align*}
and the $36$ data symbols $a^{(\ell)}_{i,j}$ with $\ell=1,2,3$,  $i=1,2,3,4$ and $j=2,3,4$
to recover the first 12 data symbols in node $1$. The repair bandwidth of node $1$ is $80$
symbols. Similarly, we can calculate that the repair bandwidth of each node in
$\{2,3,4\}$ is 80 symbols, the repair bandwidth of each node in $\{5,8\}$ is 90 symbols,
and the repair bandwidth of each node in $\{6,7\}$ is 86 symbols.

\subsection{Repair Process for Parity Node}\label{piggyback2parity}

Suppose that node $t$ fails, where $t\in\{k+1,k+2,\ldots,k+r\}$. The repair method of
node $t$ is as follow.
\begin{enumerate}
\item We compute the $sr$ symbols $f_{x}({\boldsymbol{a}}^{(\ell)}_{t-k})$ with
$\ell =1,2,\ldots,s$ and $x = 1,2,\ldots,r$ by downloading the $sk$ symbols in the
first $k$ rows in columns $(\ell-1)r+t-k$ with $\ell = 1,2,\ldots,s$.
\item We recover the erased $sr$ symbols $f''_{t-k}({\boldsymbol{a}}^{(\ell)}_{i})$
with $\ell = 1,2,\ldots,s$ and $i = 1,2,\ldots,r$ by downloading the data symbols which are
used in computing the piggyback functions located in the $r$ rows $k+1,k+2,\ldots,k+r$
in the $L$ columns $(\ell-1)r+t-k$ with $\ell=s-L+1,s-L+2,\ldots,s$ and downloading
the $s(r-1)$ symbols $f''_{x}({\boldsymbol{a}}^{(\ell)}_{t-k})$ with $\ell =1,2,\ldots,s$
and $x = 1,2,\ldots,t-k-1,t-k+1,\ldots,r$.
\end{enumerate}

Continue the code in Example \ref{ex3}. Suppose that node $t=k+1$ fails, we download
$32$ data symbols $a^{(\ell)}_{1,j}$ with $\ell=1,2,3,4$ and $j=1,2,\ldots,8$
to calculate $f_x({\boldsymbol{a}}^{(\ell)}_1)$ with $x = 1,2,3,4$ and $\ell = 1,2,3,4$.
Then we download the following 20 data symbols
\begin{align*}
	& a^{(1)}_{4,1}, a^{(1)}_{4,2}, a^{(1)}_{4,3}, a^{(1)}_{4,4}, a^{(2)}_{3,1}, a^{(2)}_{3,2}, a^{(2)}_{3,3}, a^{(2)}_{3,4}, \\
	& a^{(3)}_{2,1}, a^{(3)}_{2,2}, a^{(3)}_{2,3}, a^{(3)}_{2,4}, a^{(1)}_{4,5}, a^{(2)}_{4,6}, a^{(1)}_{4,8}, a^{(2)}_{3,5}, \\
	& a^{(1)}_{3,7}, a^{(2)}_{3,8}, a^{(1)}_{2,6}, a^{(2)}_{2,7},
\end{align*}
to calculate the piggyback functions
\[
g(4,1),g(7,1),g(10,1),g(4,2),g(7,2),g(10,2).
\]
Together with $f_x({\boldsymbol{a}}^{(\ell)}_1)$ for $x = 1,2,3,4$ and $\ell = 1,2,3,4$,
and the above 6 piggyback functions, we can recover all the symbols in node $t$ by
downloading the following symbols.
\begin{align*}
& f'_1(\textbf{a}^{(3)}_2)+f'_2(\textbf{a}^{(3)}_1), f'_1(\textbf{a}^{(3)}_3)+f'_3(\textbf{a}^{(3)}_1), \\
&f'_1(\textbf{a}^{(3)}_4)+f'_4(\textbf{a}^{(3)}_1), f'_1(\textbf{a}^{(4)}_2)+f'_2(\textbf{a}^{(4)}_1), \\
&f'_1(\textbf{a}^{(4)}_3)+f'_3(\textbf{a}^{(4)}_1), f'_1(\textbf{a}^{(4)}_4)+f'_4(\textbf{a}^{(4)}_1).
\end{align*}

\subsection{Repair Bandwidth of $\mathcal{C}_2(n,k,m,L)$}
When $L$ is a factor of $k$, with similar proof in Lemma \ref{lemma3}, we can show the
bounds of repair bandwidth of data nodes of $\mathcal{C}_2(n,k,m,L)$ in the next lemma
and we omit the proof.

\begin{lemma}\label{lemma6}
If $L$ is a factor of $k$, the lower bound $\gamma_{min}^{sys}$ and the upper bound
$\gamma_{max}^{sys}$ of the average repair ratio of data nodes of $\mathcal{C}_2(n,k,m,L)$ is
	\begin{eqnarray}\nonumber
		&\gamma_{min}^{sys} = \frac{L+1}{2s} + \frac{(s^2-s(L+1)+\frac{(L+1)(2L+1)}{6})}{Ls(r-1)} +\frac{(r-1)(L-3)}{2ksr}, \\ \nonumber
		&\gamma_{max}^{sys} = \gamma_{min}^{all} + \frac{L(r-1)}{4sk^2}. \nonumber
	\end{eqnarray}
\end{lemma}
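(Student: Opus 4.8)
The plan is to transcribe the proof of Lemma~\ref{lemma3}, replacing the three-step repair of Section~\ref{py1dt} by the two-step data-node repair of Section~\ref{sec:repair2} and restricting the averaging to the $k$ data nodes. For $i=1,2,\ldots,L$ and $j=1,2,\ldots,r(r-1)$, I would put $n_{j,i}:=s_{i,j}$, the number of PS of $\Phi_i$ appearing in the piggyback function $g(j,s-i+1)$ of \eqref{gx}. Exactly as in Lemma~\ref{lemma3}, the ``as average as possible'' rule forces $(n_{j_1,i}-n_{j_2,i})^2\in\{0,1\}$ for $j_1\neq j_2$ and $\sum_{j=1}^{r(r-1)}n_{j,i}=p_i=|\Phi_i|(s-i)r$; the only structural change is that each subset now owns $r(r-1)$ piggyback functions instead of $r-1$, so the bracket $r-1$ of Lemma~\ref{lemma3} is systematically replaced by $r(r-1)$.

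Next I would count the download of a data node $t\in\Phi_u$. Step~1 fetches $(k-1)ur+ur=kur$ symbols, and summing over all data nodes with $|\Phi_i|=k/L$ (this is where the hypothesis $L\mid k$ enters) gives $\sum_{i=1}^{L}\frac{k}{L}kir=\frac{k^{2}r(L+1)}{2}$, which after dividing by $k$ and by $km=ksr$ produces the first term $\frac{L+1}{2s}$. In Step~2, for each of the $(s-u)r$ piggyback functions $g(j,s-u+1)$ containing a symbol of node $t$ we download its remaining $n_{j,u}-1$ PS together with the transformed parity symbols that carry it. The PS part is counted exactly as in Lemma~\ref{lemma3}: the $n_{j,i}$ PS inside $g(j,s-i+1)$ belong to $n_{j,i}$ \emph{distinct} nodes of $\Phi_i$ (consecutive PS in $g(j,s-i+1)$ are $r(r-1)\ge(s-i)r$ apart, hence in different nodes), so $\sum_{j\in J_t}\!\big((n_{j,u}-1)+1\big)=\sum_{j\in J_t}n_{j,u}$, summed over $t\in\Phi_i$, telescopes to $\sum_{j=1}^{r(r-1)}n_{j,i}^2$. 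Using the identity $(\sum_j n_{j,i})^2+\sum_{j_1\neq j_2}(n_{j_1,i}-n_{j_2,i})^2=r(r-1)\sum_j n_{j,i}^2$ and writing $v_i:=p_i-\lfloor\frac{p_i}{r(r-1)}\rfloor r(r-1)$ gives $\sum_j n_{j,i}^2=\frac{p_i^2+v_i(r(r-1)-v_i)}{r(r-1)}$ with $0\le v_i(r(r-1)-v_i)\le\big(\frac{r(r-1)}{2}\big)^2$. Substituting $p_i=\frac{k}{L}(s-i)r$, using $\sum_{i=1}^{L}(s-i)^2=L\big(s^2-s(L+1)+\frac{(L+1)(2L+1)}{6}\big)$, and normalizing by $k$ and $ksr$ yields the second term of $\gamma^{sys}_{min}$ and the gap $\gamma^{sys}_{max}-\gamma^{sys}_{min}=\frac{L(r-1)}{4sk^2}$ (i.e.\ $\frac{r(r-1)}{4}$ per subset, times $L$ subsets, divided by $k$ and $ksr$). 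Finally, the transformed parity symbols fetched in Step~2, together with the symbols reused from Step~1, should after bookkeeping collapse to the residual term $\frac{(r-1)(L-3)}{2ksr}$.

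The main obstacle is that last residual term, which is the genuinely new ingredient relative to Lemma~\ref{lemma3}. In $\mathcal{C}_1$ every piggyback function occupies its own parity symbol, whereas in $\mathcal{C}_2$ the function $g(j,s-i+1)$ is hidden inside a symbol $f''_x(\boldsymbol{a}^{(\ell)}_{i'})$ that the transformation \eqref{eq:trans} couples with its partner $f''_{i'}(\boldsymbol{a}^{(\ell)}_x)$; recovering one underlying $f'$ symbol requires downloading \emph{both} members of the pair and subtracting (which is legitimate because $\theta_{x,i}-1$ is invertible), so exposing the $(s-u)r$ useful piggyback functions can force the download of ``partner'' symbols carrying piggyback functions that are not needed for node $t$. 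I would therefore have to show that, once the $(s-u)r$ protected columns of a node in $\Phi_u$ are spread over the $r(r-1)$ piggyback functions that \eqref{eq:trans} pairs up, the number of such partner downloads and Step-1 reuses, averaged over the $k$ data nodes and the $L$ subsets, simplifies to exactly $\frac{(r-1)(L-3)}{2ksr}$; the remaining steps are then a line-by-line copy of the proof of Lemma~\ref{lemma3}.
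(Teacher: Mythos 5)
Your setup reproduces exactly the route the paper itself gestures at (it omits the proof and says ``similar to Lemma~\ref{lemma3}''), and the parts you carry out are fine: the step-1 count $\sum_{i}\frac{k}{L}\cdot kir$ gives $\frac{L+1}{2s}$, and replacing $r-1$ by $r(r-1)$ in the Lemma~\ref{lemma3} identity with $p_i=\frac{k}{L}(s-i)r$ gives the second term and the gap $\frac{L(r-1)}{4sk^2}$. The genuine gap is precisely where you stop: the residual term $\frac{(r-1)(L-3)}{2ksr}$ is asserted to ``collapse after bookkeeping,'' but the bookkeeping you describe cannot produce it. Following your own plan (and the repair strategy the paper uses in Example~\ref{ex3}), the step-2 cost for a node $t\in\Phi_i$ is $\sum_{g\ni t} n_{g,i}$ \emph{plus} an excess $e_t\ge 0$ equal to the number of transformation pairs of which node $t$ needs exactly one member (each such pair forces two parity downloads for one useful piggyback function, since the partner $f'$ can only be isolated via the pair because $\theta_{x,i}-1$ is invertible). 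Summing over $\Phi_i$ telescopes to $\sum_j n_{j,i}^2+\sum_{t\in\Phi_i}e_t$, so the residual contribution is $\frac{\sum_t e_t}{k^2sr}$ with $\sum_t e_t\ge 0$, whereas the claimed term is negative whenever $L<3$. So either additional savings (reuse of step-1 downloads, a smarter extraction of the paired $f'$ symbols, or a different choice of the $k$ helper symbols in step 1) must be identified and quantified, or the term must be derived by a different count; your sketch does neither.

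A concrete sanity check shows this is not a cosmetic issue: for the paper's own code $\mathcal{C}_2(12,8,16,2)$ the per-node data-repair costs are $80,80,80,80,90,86,86,90$, i.e.\ total $672$ and average ratio $\frac{84}{128}=0.65625$, and indeed $672=\frac{k^2r(L+1)}{2}+\sum_{i}\sum_j n_{j,i}^2+\sum_t e_t=384+280+8$ with $\sum_t e_t=8$, while the claimed residual would require $\frac{k(r-1)(L-3)}{2}=-12$; the resulting value even exceeds the stated upper bound ($\approx 0.64$) of the lemma. So the third term is exactly the nontrivial new ingredient of this lemma (how the transformation pairing interacts with which $(s-i)r$ cyclically consecutive piggyback positions a node needs, averaged over node positions and subsets), and your proposal leaves it unproven; as written, the ``line-by-line copy of Lemma~\ref{lemma3}'' does not close it, and any complete proof must either exhibit the extra savings that make the residual negative for small $L$ or correct the accounting of that term.
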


By Lemma \ref{lemma6}, we have $|\gamma_{sys} - \gamma_{min}^{sys}| \leq
|\gamma_{max}^{sys} - \gamma_{min}^{sys}| = \frac{L(r-1)}{4sk^2}$. When $k \to \infty$,
we have $\gamma^{sys} = \gamma^{sys}_{min}$. Therefore, we do not distinguish
between $\gamma^{sys}$ and $\gamma^{sys}_{min}$ in the rest of the paper.

Similar to the proof of Lemma \ref{lemma4}, we can also show that the minimum value of average repair bandwidth ratio $\gamma^{sys}$
is achieved when $L = \sqrt{\frac{6s^2-6s+1}{3r-1}}$.

\begin{lemma}
If $L$ is a factor of $k$, the average repair bandwidth ratio of parity nodes for codes $\mathcal{C}_2(n,k,m,L)$ is
	\begin{equation}
		\gamma_{parity} = \frac{2}{r}+\frac{1}{k} - \frac{1}{kr} -\frac{L+1}{2sr}.
	\end{equation}
\end{lemma}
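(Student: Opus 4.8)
The plan is to compute the total repair bandwidth of the $r$ parity nodes by adding up the costs of the three download steps of Section~\ref{piggyback2parity} as the failed parity node $t=k+j_0$ ranges over $j_0\in\{1,2,\ldots,r\}$, and then divide the resulting average by the number $km=ksr$ of data symbols.

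First I would isolate the per-node costs that do not depend on $j_0$. In Step~1 every parity node downloads all $k$ data symbols of each of the $s$ instances $\boldsymbol{a}^{(1)}_{j_0},\ldots,\boldsymbol{a}^{(s)}_{j_0}$, i.e.\ $sk$ symbols; the last part of Step~2 always downloads the $s(r-1)$ transformation symbols $f''_{x}(\boldsymbol{a}^{(\ell)}_{j_0})$ with $\ell=1,\ldots,s$ and $x\neq j_0$, which lie in the parity rows and hence overlap neither one another nor the Step~1 downloads. These two pieces contribute $r\bigl(sk+s(r-1)\bigr)$ to the total taken over all $r$ parity nodes.

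Next I would treat the remaining part of Step~2: the data symbols fed into the $(r-1)L$ piggyback functions sitting in the $L$ columns $(\ell-1)r+j_0$, $\ell=s-L+1,\ldots,s$. The key observation is that for a fixed $i\in\{1,\ldots,L\}$, as $j_0$ ranges over $\{1,\ldots,r\}$ the columns $(s-i)r+j_0$ sweep out all $r$ columns of $W_{s-i+1}$, so the $r-1$ piggyback functions attached in each column collectively cover each of the $r(r-1)$ functions $g(1,s-i+1),\ldots,g(r(r-1),s-i+1)$ exactly once; since by Eq.~\eqref{gx} every protect symbol of $\Phi_i$ appears in exactly one of these functions, the number of data symbols downloaded in this part, summed over all $r$ parity nodes for this fixed $i$, equals the number $p_i=|\Phi_i|(s-i)r$ of protect symbols in $\Phi_i$. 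Summing over $i$ and using $L\mid k$ (so $|\Phi_i|=k/L$) gives $\sum_{i=1}^{L}p_i=\frac{kr}{L}\sum_{i=1}^{L}(s-i)=\frac{kr(2s-L-1)}{2}$. Two disjointness facts are needed to make this counting exact rather than merely an upper bound: the downloaded symbols in this part are pairwise distinct (immediate from the partition structure of Eq.~\eqref{gx} together with the fact that the $\Phi_i$ occupy disjoint sets of rows), and they are disjoint from the Step~1 downloads. The latter is the one delicate point; it rests on a structural property of the layout in Fig.~\ref{pr} — the analogue of Lemma~\ref{lm:piggy} for $\mathcal{C}_2$ — namely that the piggyback functions attached to the parity symbols in the $j_0$-th column of $W_{s-i+1}$ use only protect symbols coming from instances $\boldsymbol{a}^{(\ell)}_{i'}$ with $i'\neq j_0$, so none of those protect symbols sits in a column $\equiv j_0\pmod r$, i.e.\ in a column already read in Step~1.

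Finally I would assemble the count. The total repair bandwidth of the $r$ parity nodes equals $r\bigl(sk+s(r-1)\bigr)+\frac{kr(2s-L-1)}{2}$; dividing by $r$ for the average and by $km=ksr$ for the ratio yields
\[
\gamma_{parity}=\frac{sk+s(r-1)+\tfrac{k(2s-L-1)}{2}}{ksr}=\frac{2}{r}+\frac{r-1}{kr}-\frac{L+1}{2sr}=\frac{2}{r}+\frac{1}{k}-\frac{1}{kr}-\frac{L+1}{2sr},
\]
which is the claimed formula. The main obstacle is the disjointness property invoked above; the remainder is bookkeeping that parallels the proofs of Lemmas~\ref{lemma3} and~\ref{lemma6}.
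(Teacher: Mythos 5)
Your proposal is correct and follows essentially the same route as the paper: count $sk$ step-one downloads and $s(r-1)$ parity downloads per failed parity node, observe that over all $r$ failures the piggyback-related data downloads sum to $\sum_{i=1}^{L}|\Phi_i|(s-i)r$ because the functions of each group partition the protect symbols of $\Phi_i$, then divide by $r\cdot ksr$ and use $|\Phi_i|=k/L$. The only difference is that you make explicit the disjointness of the step-one and step-two downloads, which the paper's much terser count takes for granted.
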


\begin{proof}
According to the repair methods in Section \ref{piggyback2parity}, we need to
download $\sum_{i=1}^{r}sk$ symbols in repairing each of the $r$ parity nodes in the first step.
In the second step, we need to download $\sum_{i=1}^{r}s(r-1) + 2\sum_{i = 1}^{L} |\Phi_i|r(s-i)$
in repairing each of the $r$ parity nodes. Therefore, we can calculate that
	\begin{equation}\nonumber
		\gamma^{parity} = \frac{\sum_{i = 1}^{r}s(k+r-1) + \sum_{i=1}^{L}|\Phi_i|r(s-i)}{skr^2}.
	\end{equation}
	Because $L$ is a factor of $k$, we have $|\Phi_i| = \frac{k}{L}$ for $i = 1,2,\ldots,L$,
and further obtain that
$$\gamma_{parity} = \frac{2}{r}+\frac{1}{k} - \frac{1}{kr} -\frac{L+1}{2sr}.$$
\end{proof}

\section{Comparison}\label{sec:comparison}
In this section, we evaluate the average repair bandwidth of all $n$ nodes for our two codes and the
existing piggybacking codes with low repair bandwidth.

\subsection{Piggybacking Codes with $m < r$}
\label{sec:com-1}
\begin{figure*}[htbp]
	\centering
	\begin{minipage}{0.49\linewidth}
		\centering
		\includegraphics[width=1.0\linewidth]{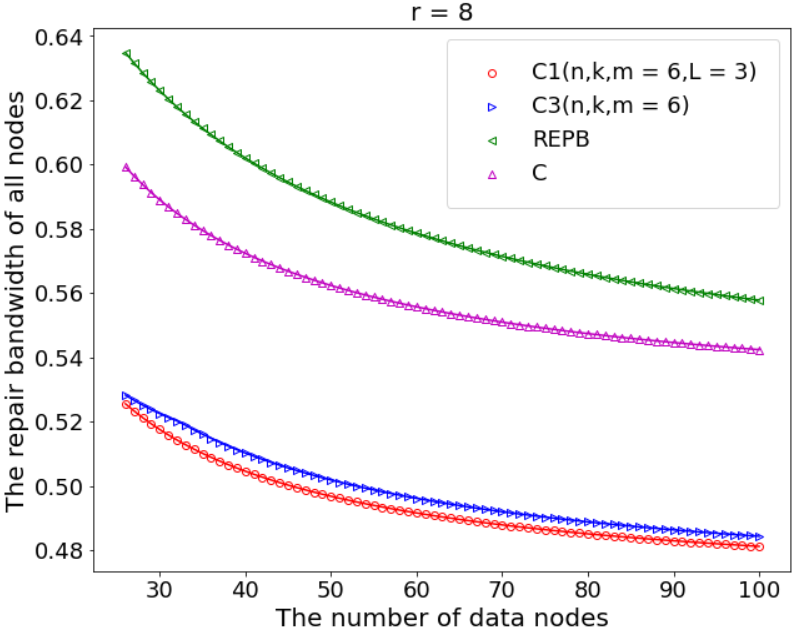}
	\end{minipage}
	\begin{minipage}{0.49\linewidth}
		\centering
		\includegraphics[width=1.0\linewidth]{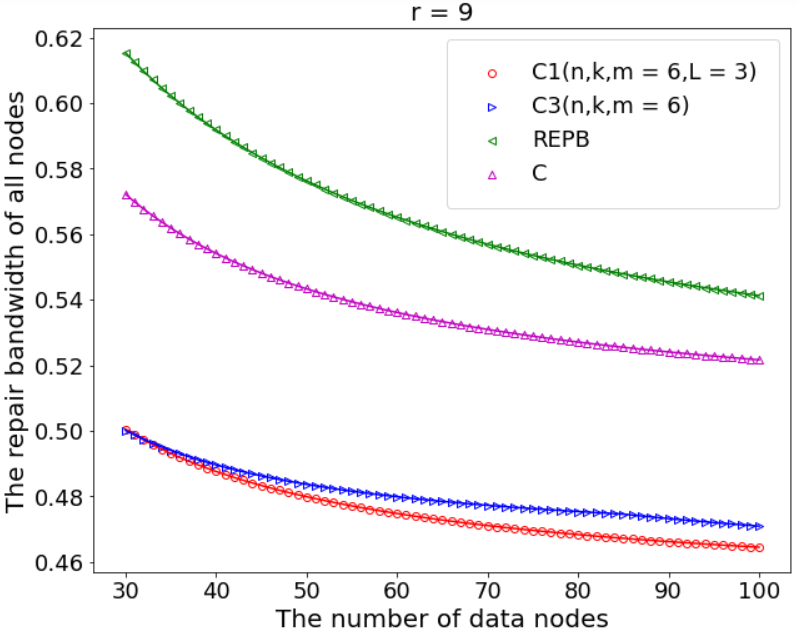}
	\end{minipage}
	\caption{The average repair bandwidth ratio of all nodes for the proposed
		codes $\mathcal{C}_1$,  REPB codes \cite{2018Repair}, codes $\mathcal{C}_3$ \cite{2022arXiv220509659W} and codes $\mathcal{C}$ \cite{2022arXiv220514555J}, where $r = 8,9$, $m = 6$ and $k = 30,31,\ldots,100$.}
	\label{fig:11}
\end{figure*}

First, we evaluate $\mathcal{C}_1(n,k,m,L)$ and the existing piggybacking codes
\cite{2018Repair,2022arXiv220509659W,2022arXiv220514555J}
such that the sub-packetization is no larger than $r$.
Denote the codes in \cite{2022arXiv220509659W} as $\mathcal{C}_3(n, k, m)$,
the MDS codes (the first codes) in \cite{2022arXiv220514555J} as $\mathcal{C}$,
the codes in \cite{2018Repair} as REPB.

Fig. \ref{fig:11} shows the evaluations for $r = 8,9$, $m = 6$ and $k = 30,31,\ldots,100$.
Note that the codes in our conference paper \cite{Shi2206:Piggybacking} can only support
the parameter $m=r$ and do not draw the points for codes \cite{Shi2206:Piggybacking} in Fig. \ref{fig:11}.
The results in Fig. \ref{fig:11} demonstrate that our codes $\mathcal{C}_1(n,k,m,L)$
have lower repair bandwidth than all existing piggybacking codes when $m < r$ for all the evaluated
parameters. 

\subsection{Piggybacking Codes with $m\geq r$}
\begin{figure*}[htbp]
	\centering
	\begin{minipage}{0.3\linewidth}
		\centering
		\includegraphics[width=1.0\linewidth]{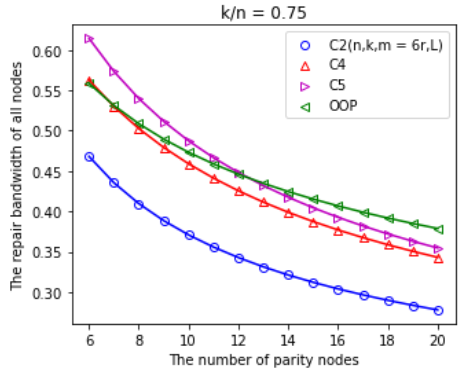}
	\end{minipage}
	\begin{minipage}{0.3\linewidth}
		\centering
		\includegraphics[width=1.0\linewidth]{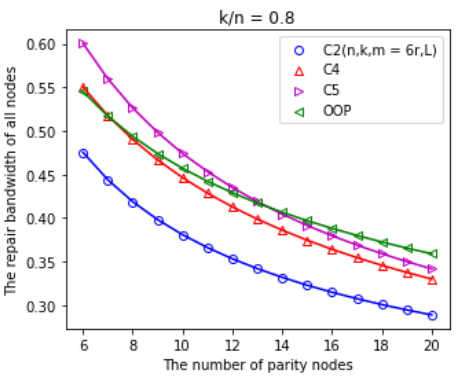}
	\end{minipage}
	\begin{minipage}{0.3\linewidth}
		\centering
		\includegraphics[width=1.0\linewidth]{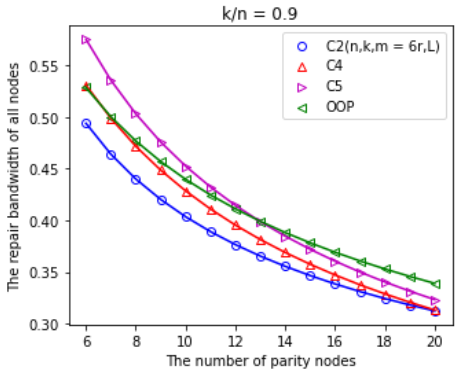}
	\end{minipage}
	
	\begin{minipage}{0.3\linewidth}
		\centering
		\includegraphics[width=1.0\linewidth]{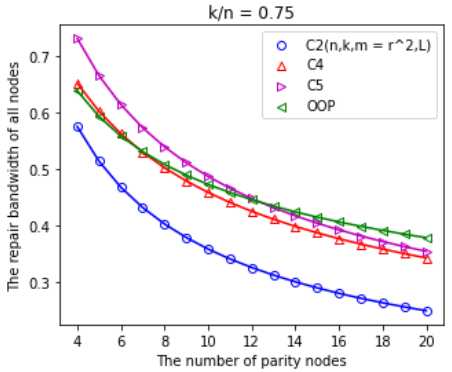}
	\end{minipage}
	\begin{minipage}{0.3\linewidth}
		\centering
		\includegraphics[width=1.0\linewidth]{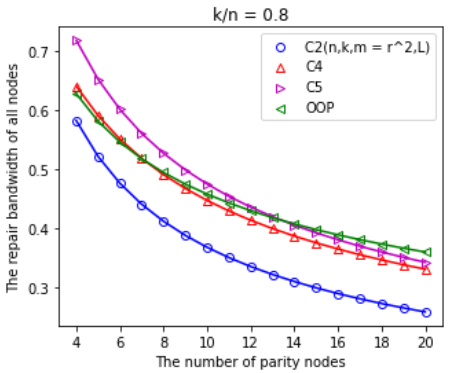}
	\end{minipage}
	\begin{minipage}{0.3\linewidth}
		\centering
		\includegraphics[width=1.0\linewidth]{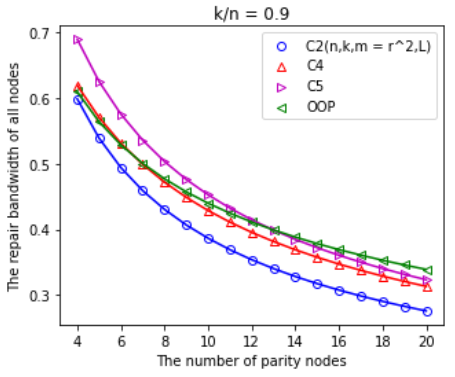}
	\end{minipage}
	\caption{The average repair bandwidth ratio of all nodes for the proposed
		codes $\mathcal{C}_2$,  OOP codes \cite{2019AnEfficient}, codes $\mathcal{C}_4$ \cite{Shi2206:Piggybacking} and codes $\mathcal{C}_5$ \cite{2021piggybacking}, where $r=4,5,6,7,\ldots,20$, code rate $\frac{k}{n}=0.75,0.8,0.9$ and $m = 6r, r^2$.}
	\label{fig:111}
\end{figure*}

In the following, we evaluate $\mathcal{C}_2(n,k,m,L)$ and the existing piggybacking codes
with $m\geq r$.


We label the codes in \cite{Shi2206:Piggybacking} as $\mathcal{C}_4$, the
codes in \cite{2021piggybacking} as $\mathcal{C}_5$.
Fig. \ref{fig:111} shows the average repair bandwidth ratio of all nodes for
our codes $\mathcal{C}_2(n,k,m,L)$ and the existing piggybacking codes, including
OOP, $\mathcal{C}_4$ and $\mathcal{C}_5$ when $r=4,5,\ldots,20$,
$\frac{k}{n}=0.75,0.8,0.9$ and $m = 6r, r^2$. The results show that our
codes $\mathcal{C}_2(n,k,m,L)$ have lower repair bandwidth than the other codes
for all the evaluated parameters. The essential reason of lower repair bandwidth of
our $\mathcal{C}_2(n,k,m,L)$ is that we jointly design the piggyback function for data node
repair and the transformation function for parity node repair.


\section{Conclusion}\label{sec:con}

In this paper, we design two classes of piggybacking codes with flexible sub-packetization level. The first piggybacking codes $\mathcal{C}_1$ have the lower repair bandwidth for single-node failure than all exsiting piggybacking codes with $m\leq r$ for all the evaluated parameters. Our second piggybacking codes can support sub-packetization $m=sr$ with $2\leq s\leq r$ that have the lowest average repair bandwidth for all nodes among all existing piggybacking codes for the evaluated parameters. The piggybacking
codes constructions by jointly designing piggyback functions and transformation functions
to support more larger sub-packetization, say $m=sr$ with $s>r$, to further reduce repair bandwidth
is one of our future work.

\ifCLASSOPTIONcaptionsoff
\newpage
\fi

\bibliographystyle{IEEEtran}
\bibliography{ck}

\begin{thebibliography}{10}
\providecommand{\url}[1]{#1}
\csname url@samestyle\endcsname
\providecommand{\newblock}{\relax}
\providecommand{\bibinfo}[2]{#2}
\providecommand{\BIBentrySTDinterwordspacing}{\spaceskip=0pt\relax}
\providecommand{\BIBentryALTinterwordstretchfactor}{4}
\providecommand{\BIBentryALTinterwordspacing}{\spaceskip=\fontdimen2\font plus
\BIBentryALTinterwordstretchfactor\fontdimen3\font minus
  \fontdimen4\font\relax}
\providecommand{\BIBforeignlanguage}[2]{{%
\expandafter\ifx\csname l@#1\endcsname\relax
\typeout{** WARNING: IEEEtran.bst: No hyphenation pattern has been}%
\typeout{** loaded for the language `#1'. Using the pattern for}%
\typeout{** the default language instead.}%
\else
\language=\csname l@#1\endcsname
\fi
#2}}
\providecommand{\BIBdecl}{\relax}
\BIBdecl

\bibitem{Shi2206:Piggybacking}
H.~Shi, H.~Hou, Y.~S. Han, P.~P.-C. Lee, Z.~Jiang, Z.~Huang, and B.~Bai, ``New
  piggybacking codes with lower repair bandwidth for any {Single-Node}
  failure,'' in \emph{2022 IEEE International Symposium on Information Theory
  (ISIT) (ISIT 2022)}, Espoo, Finland, June 2022.

\bibitem{reed1960}
I.~S. Reed and G.~Solomon, ``{Polynomial Codes over Certain Finite Fields},''
  \emph{Journal of the Society for Industrial \& Applied Mathematics}, vol.~8,
  no.~2, pp. 300--304, 1960.

\bibitem{2013XORing}
M.~Sathiamoorthy, M.~Asteris, D.~Papailiopoulos, A.~G. Dimakis, and
  D.~Borthakur, ``Xoring elephants: Novel erasure codes for big data,''
  \emph{VLDB Endowment}, 2013.

\bibitem{2013A}
K.~V. Rashmi, N.~B. Shah, D.~Gu, H.~Kuang, D.~Borthakur, and K.~Ramchandran,
  ``A solution to the network challenges of data recovery in erasure-coded
  distributed storage systems: A study on the facebook warehouse cluster,''
  \emph{Usenix Hotstorage}, 2013.

\bibitem{dimakis2010}
A.~Dimakis, P.~Godfrey, Y.~Wu, M.~Wainwright, and K.~Ramchandran, ``{Network
  Coding for Distributed Storage Systems},'' \emph{IEEE Trans. Information
  Theory}, vol.~56, no.~9, pp. 4539--4551, Sep. 2010.

\bibitem{rashmi2011}
K.~V. Rashmi, N.~B. Shah, and P.~V. Kumar, ``{Optimal Exact-Regenerating Codes
  for Distributed Storage at the MSR and MBR Points via a Product-Matrix
  Construction},'' \emph{IEEE Trans. Information Theory}, vol.~57, no.~8, pp.
  5227--5239, August 2011.

\bibitem{hou2016}
H.~Hou, K.~W. Shum, M.~Chen, and H.~Li, ``{BASIC Codes: Low-Complexity
  Regenerating Codes for Distributed Storage Systems},'' \emph{IEEE Trans.
  Information Theory}, vol.~62, no.~6, pp. 3053--3069, 2016.

\bibitem{li2018}
J.~Li, X.~Tang, and C.~Tian, ``{A Generic Transformation to Enable Optimal
  Repair in MDS codes for Distributed Storage Systems},'' \emph{IEEE Trans.
  Information Theory}, vol.~64, no.~9, pp. 6257--6267, 2018.

\bibitem{2020Toward}
H.~Hou, P.~Lee, and Y.~S. Han, ``Toward optimality in both repair and update
  via generic mds code transformation,'' in \emph{2020 IEEE International
  Symposium on Information Theory (ISIT)}, 2020.

\bibitem{2020Binary}
H.~Hou and P.~Lee, ``Binary mds array codes with optimal repair,'' \emph{IEEE
  Transactions on Information Theory}, vol.~66, no.~3, pp. 1405--1422, 2020.

\bibitem{ye2017}
M.~Ye and A.~Barg, ``{Explicit Constructions of High-Rate MDS Array Codes with
  Optimal Repair Bandwidth},'' \emph{IEEE Trans. Information Theory}, vol.~63,
  no.~4, pp. 2001--2014, 2017.

\bibitem{2018A}
S.~B. Balaji and P.~V. Kumar, ``A tight lower bound on the sub- packetization
  level of optimal-access msr and mds codes,'' in \emph{Proc. {IEEE} Int. Symp.
  Inf. Theory}, 2018, pp. 2381--2385.

\bibitem{2018HashTag}
K.~Kralevska, D.~Gligoroski, R.~E. Jensen, and H.~Overby, ``Hashtag erasure
  codes: From theory to practice,'' \emph{IEEE}, no.~4, 2018.

\bibitem{2017A}
K.~V. Rashmi, N.~B. Shah, and K.~Ramchandran, ``A piggybacking design framework
  for read-and download-efficient distributed storage codes,'' \emph{IEEE
  Transactions on Information Theory}, vol.~63, no.~9, pp. 5802--5820, 2017.

\bibitem{2018Repair}
S.~Yuan, Q.~Huang, and Z.~Wang, ``{A Repair-Efficient Coding for Distributed
  Storage Systems Under Piggybacking Framework},'' \emph{IEEE Trans.
  Communications}, vol.~66, no.~8, pp. 3245--3254, 2018.

\bibitem{2019AnEfficient}
G.~Y. Li, X.~Lin, and X.~Tang, ``{An Efficient One-to-One Piggybacking Design
  for Distributed Storage Systems},'' \emph{IEEE Trans. Communications},
  vol.~67, no.~12, pp. 8193--8205, 2019.

\bibitem{article}
C.~Shangguan and G.~Ge, ``A new piggybacking design for systematic mds storage
  codes,'' \emph{Designs, Codes and Cryptography}, vol.~87, 12 2019.

\bibitem{2021piggyback}
Z.~Jiang, H.~Hou, Y.~S. Han, Z.~Huang, B.~Bai, and G.~Zhang, ``An efficient
  piggybacking design with lower repair bandwidth and lower
  sub-packetization,'' in \emph{Proc. {IEEE} Int. Symp. Inf. Theory}, 2021, pp.
  2328--2333.

\bibitem{2021piggybacking}
R.~Sun, L.~Zhang, and J.~Liu, ``A new piggybacking design with low repair
  bandwidth and complexity,'' \emph{IEEE Communications Letters}, vol.~25,
  no.~7, pp. 2099--2103, 2021.

\bibitem{2022arXiv220509659W}
K.~{Wang} and Z.~{Zhang}, ``{An Efficient Piggybacking Design Framework with
  Sub-packetization $l\le r$ for All-Node Repair},'' \emph{arXiv e-prints}, p.
  arXiv:2205.09659, May 2022.

\bibitem{2022arXiv220514555J}
Z.~{Jiang}, H.~{Hou}, Y.~S. {Han}, P.~P.~C. {Lee}, B.~{Bai}, and Z.~{Huang},
  ``{Two New Piggybacking Designs with Lower Repair Bandwidth},'' \emph{arXiv
  e-prints}, p. arXiv:2205.14555, May 2022.

\end{thebibliography}
\end{document}